\newtheorem{theorem}{Theorem}
\newtheorem{proposition}[theorem]{Proposition}
\newtheorem{lemma}[theorem]{Lemma}
\newcommand{\be}{\begin{equation}}
\newcommand{\ee}{\end{equation}}
\newcommand{\bea}{\begin{eqnarray}}
\newcommand{\eea}{\end{eqnarray}}
\newcommand{\ba}{\begin{array}}
\newcommand{\ea}{\end{array}}
\newcommand{\bean}{\begin{eqnarray*}}
\newcommand{\eean}{\end{eqnarray*}}
\newcommand{\pa}{\partial}
\begin{document}

\title{The applications of the gauge transformation\\
 for the BKP hierarchy}
\author{Jipeng Cheng\dag, Jingsong He$^*$ \ddag }
\dedicatory {  \dag \ Department of Mathematics, China University of
Mining and Technology, Xuzhou, Jiangsu 221116,
P.\ R.\ China\\
\ddag \ Department of Mathematics, Ningbo University, Ningbo,
Zhejiang 315211, P.\ R.\ China }

\thanks{$^*$Corresponding author. Email: hejingsong@nbu.edu.cn.}
\begin{abstract}
In this paper, we investigated four applications of the gauge
transformation for the BKP hierarchy. Firstly, it is found that the
orbit of the gauge transformation for the constrained BKP hierarchy
defines a special $(2 +1)$-dimensional Toda lattice equation
structure. Then the tau function of the BKP hierarchy generated by
the gauge transformation is showed to be the Pfaffian. And the
higher Fay-like identities for the BKP hierarchy is also obtained
through the gauge transformation. At last, the compatibility between
the additional symmetry and the gauge transformation of the BKP
hierarchy is proved.\\
\textbf{Keywords}:  gauge transformation, the BKP hierarchy,
Pfaffian, Fay-like identity, additional symmetry\\
\textbf{PACS}: 02.30.Ik\\
\textbf{2010 MSC}: 35Q53, 37K10, 37K40
\end{abstract}
\maketitle
\section{Introduction}
The gauge transformation \cite{chau1992,oevel1993} provides a simple way to construct solutions for integrable hierarchies.
By now, the gauge transformations of many integrable hierarchies have been constructed, for example, the KP hierarchy\cite{chau1992,oevel1993},
the constrained KP hierarchy\cite{oevel1993,aratyn1995,chau1997,willox,he2003}, the constrained BKP and CKP hierarchy\cite{nimmo,he2007,hejhep,hejnmp}
 (cBKP and cCKP), the discrete KP hierarchy\cite{oevel1996,liu2010}, the $q$-KP hierarchy\cite{tu1998,he2006} and so on.
 In this paper, we will mainly study the gauge transformation for the BKP hierarchy.

 The BKP hierarchy \cite{djkm1983} is one of the most important sub-hierarchies of the
  KP hierarchy defined by restricting the Lax operator
\footnote{Here $\pa=\pa_x$ and the asterisk stands for the conjugation
operation: $(AB)^*=B^*A^*$, $\pa^*=-\pa$, $f^*=f$ with $f$ be a
function.}$L^*=-\pa L \pa^{-1}$. In order to keep the restrictions
on the Lax operator of the BKP hierarchy, one cannot do gauge
transformation for the BKP hierarchy by only using one kind of the
basic gauge transformation operators: the differential type $T_D$
and the integral type $T_I$ developed by Chau {\it et al} in
\cite{chau1992}, and instead has to use the combination of $T_D$ and
$T_I$\cite{nimmo,he2007,hejhep,hejnmp}. In particular, the generating functions
of the combination of $T_D$ and $T_I$ are dependent
on each other, which is also requested by the restriction of the Lax operator.
 As for the cBKP hierarchy, besides the usual BKP constraint on the Lax operator,
another constraint defined with the eigenfunctions (see (\ref{cbkplax})) is also needed.
So to ensure this additional  constraint for the cBKP hierarchy, the corresponding
gauge transformation operator can be the one of the usual BKP hierarchy just by
letting the generating function be one of the original eigenfunctions in the definition.

Furthermore, it is an interesting problem to explore the non-trivial
relevance of the gauge transformation to the other integrable
properties of the BKP besides the explicit solutions, which shall be
illustrated from the following four concerns. We firstly show that the
orbit of the gauge transformation for the constrained BKP hierarchy
defines a special $(2 +1)$-dimensional Toda lattice equation
structure (see (\ref{2todaequation})), proposed by Cao {\it et al}
in \cite{cao}. This equation has some importance in mathematics and
physics, whose continuous analogue is equivalent to the Ito equation
\cite{qianjpa}. And some interesting integrable properties of this
special $(2 +1)$-dimensional Toda lattice equation can be seen in
\cite{cao,qianjpa, hujmp,qiaozj}.

Then starting from the Grammian determinant solutions of the BKP hierarchy, we derived
two types of the Pfaffian structures \cite{hirotajpsj,lorisjmp1999} for the BKP hierarchy (see Proposition \ref{bkpgaugen}).
Since the BKP hierarchy is the subhierarchy of the KP hierarchy, there are two types of the tau functions for the BKP hierarchy\cite{djkm1983}:
one is inherited from the KP hierarchy (denoted by $\tau_{KP}$) by cancelling the even flow variables, the other is of its own (denoted by $\tau$) defined directly by the
odd variables according to the flow equations. In \cite{he2007},  the transformed tau function under the gauge transformation of the BKP hierarchy is only provided for $\tau_{KP}$
in the form of the Grammian determinant\cite{lorisjmp1997}, while the transformed tau function of its own is not considered. The crucial point to derive the
transformed tau function of BKP's own is to root the Grammian determinant according to the relation of these two types
of the tau functions\cite{djkm1983}. In this paper, we derived the two types of the Pfaffian structures for the BKP hierarchy, according to the even or odd
steps of the gauge transformation. In particular, for the case of the even times, it can be obtained directly by using the definition of the Pfaffian,
while the case of the odd times is some complicated.

Further through the gauge transformation of the BKP hierarchy, the higher order Fay-like identities\cite{vM94,vandeluer,Tu07} are derived. The transformed tau function
under the $k+1$-step gauge transformation can be expressed in two different ways. One is starting from the spectral representation for the eigenfunction
of the BKP hierarchy \cite{cheng2010}, then after the application of the $k$-step gauge transformation, the recurrence relation for the transformed tau function
can be obtained. The first expression is derived by solving this recurrence relation. The other is got by expressing the eigenfunctions in the Pfaffian structure for the transformed tau function through the spectral representation\cite{cheng2010}. The comparison of these two different expressions gives rise to the higher Fay-like identities.

At last, the compatibility between the additional symmetry and the gauge transformation of the BKP
hierarchy is studied. The additional symmetry\cite{OS86,ASM95,D95,takasaki1993,Tu07,he2007b,cheng2011,1li2012,tian2011}
is a kind of symmetry depending explicitly on the space and time variables, involved in so-called string equation and the generalized
Virasoro constraints in matrix models of the 2d quantum gravity (see \cite{vM94} and references therein).  In order to show a new inner
consistency of the BKP integrable hierarchy, it is an interesting problem to investigate the
compatibility between the gauge transformation and the additional symmetry.

This paper is organized in the following way. In section 2, some backgrounds about the BKP hierarchy are presented.
Then, the gauge transformations of the BKP and cBKP hierarchies are reviewed
and the orbits of the gauge transformation for BKP hierarchy are studied in Section 3. Then, the Pfaffian structures
of the BKP hierarchy are investigated in Section 4. Further, the higher order Fay-like identities through the gauge transformation
is derived in Section 5. And in Section 6, the compatibility of the additional symmetry and the gauge transformation for the BKP hierarchy is checked.
At last, we devote section 7 to some conclusions and discussions.
\section{Backgrounds on the BKP Hierarchy}
In this section, we shall review some backgrounds of the BKP
hierarchy \cite{djkm1983}. The BKP hierarchy is defined in Lax form
as follows
\begin{equation}\label{laxeq}
\pa_{2n+1}L=[B_{2n+1},L],\quad B_{2n+1}=(L^{2n+1})_+,\quad
n=0,1,2,\cdots,
\end{equation}
 where the Lax operator is given by
 \begin{equation}\label{laxop}
 L=\pa+u_2\pa^{-1}+u_3\pa^{-2}+\cdots,
\end{equation}
with the coefficient functions $u_i$ depending on the time variables
$t=(t_1=x,t_3,t_5,\cdots)$, and satisfies the
 BKP constraint
\begin{equation} \label{constr}
L^*=-\pa L\pa^{-1},
\end{equation}
which is equivalent to the condition
\begin{equation} \label{constrequivalent}
B_{2n+1}(1)=0.
\end{equation}
 Here
$\pa_{2n+1}=\pa_{t_{2n+1}}$. And for any (pseudo-) differential
operator $A$, $B$, and a function $f$, $(A)_{\pm}$ denote the
differential part and the integral part of the pseudo-differential
operator $A$ respectively. The symbol $A(f)$ will indicate the
action of $A$ on $f$, whereas the symbol $Af$ will denote just
operator product of $A$ and $f$. The Lax equation (\ref{laxeq}) is
equivalent to the compatibility condition of the linear system
\begin{equation}\label{lineq}
L(\psi_{BA}(t,\lambda))=\lambda \psi_{BA}(t,\lambda),\qquad
\pa_{2n+1}\psi_{BA}(t,\lambda)=B_{2n+1}(\psi_{BA}(t,\lambda)),
\end{equation}
where $\psi_{BA}(t,\lambda)$ is called the Baker-Akhiezer (BA) wave
function.

The whole hierarchy can be expressed in terms of a dressing operator
$W$, so that
\begin{equation}\label{laxdressing}
L=W\pa W^{-1},\qquad W=1+\sum_{j=1}^\infty w_j\pa^{-j},
\end{equation}
and the Lax equation is equivalent to the Sato's equation
\begin{equation}\label{satoeq}
\pa_{2n+1}W=-(L^{2n+1})_-W,
\end{equation}
with constraint
 \begin{equation}\label{constrsato}
  W^*\pa W=\pa.
 \end{equation}

Let the solutions of the linear system (\ref{lineq}) be the form
\begin{equation}\label{wavefun}
\psi_{BA}(t,\lambda)=W(e^{\xi(t,\lambda)})=w(t,\lambda)e^{\xi(t,\lambda)}
\end{equation}
where $\xi(t,\lambda)=\sum_{i=0}^\infty t_{2i+1}\lambda^{2i+1}$ and
$w(t,\lambda)=1+w_1/\lambda+w_2/\lambda^2+\cdots$. Then
$\psi_{BA}(t,z)$ is a wave function of the BKP hierarchy if and only
if it satisfies the bilinear identity\cite{djkm1983} \be \int
d\lambda\lambda^{-1}\psi_{BA}(t,\lambda)\psi_{BA}(t',-\lambda)=1,\quad
\forall t, t', \label{bileq} \ee where $\int
d\lambda\equiv\oint_\infty\frac{d\lambda}{2\pi i}=
Res_{\lambda=\infty}$ and $t=(t_1=x,t_3,t_5, \cdots)$.

From the bilinear identity (\ref{bileq}), solutions of the BKP
hierarchy can be characterized by a single function $\tau(t)$ called
$\tau$-function such that\cite{djkm1983}
\begin{equation}\label{tau}
\psi_{BA}(t,\lambda)=\frac{\tau
(t-2[\lambda^{-1}])}{\tau(t)}e^{\xi(t,\lambda)},
\end{equation}
where
$[\lambda^{-1}]=(\lambda^{-1},\frac{1}{3}\lambda^{-3},\cdots)$. This
implies that all dynamical variables $\{u_i \}$ in the Lax  operator
$L$ can be expressed by $\tau$-function. Moreover, another important
property of $\tau$ function of the BKP is the following Fay like
identity\cite{Tu07}
 \begin{eqnarray}
&&\sum_{(s_1,s_2,s_3)}\frac{(s_1-s_0)(s_1+s_2)(s_1+s_3)}{(s_1+s_0)(s_1-s_2)(s_1-s_3)}
\tau(t+2[s_2]+2[s_3])\tau(t+2[s_0]+2[s_1])\nonumber\\
&&+\frac{(s_0-s_1)(s_0-s_2)(s_0-s_3)}{(s_0+s_1)(s_0+s_2)(s_0+s_3)}
\tau(t+2[s_0]+2[s_1]+2[s_2]+2[s_3])\tau(t)=0, \label{Fayeq}
\end{eqnarray}
where $(s_1,s_2,s_3)$ stands for cyclic permutations of $s_1$, $s_2$
and $s_3$, and the differential Fay identity \cite{Tu07}
 \begin{eqnarray}
&&\left(\frac{1}{s_2^2}-\frac{1}{s_1^2}\right)
\{\tau(t+2[s_1])\tau(t+2[s_2])-\tau(t+2[s_1]+2[s_2])\tau(t)\}\nonumber\\
&&=\left(\frac{1}{s_2}+\frac{1}{s_1}\right)
\{\pa\tau(t+2[s_2])\tau(t+2[s_1])-\pa\tau(t+2[s_1])\tau(t+2[s_2])\}\nonumber\\
&&\quad+\left(\frac{1}{s_2}-\frac{1}{s_1}\right)\{\tau(t+2[s_1]+2[s_2])\pa\tau(t)-
\pa\tau(t+2[s_1]+2[s_2])\tau(t)\}. \label{dFayeq}
\end{eqnarray}

In the BKP hierarchy, if $\Phi$ (or $\Psi$) satisfies
\begin{equation}\label{eigen}
    \pa_{2n+1}\Phi=B_{2n+1}(\Phi)
    \ \left({\rm or}\ \pa_{2n+1}\Psi=-B^*_{2n+1}(\Psi)\right),\ n=0,1,2,\cdots,
\end{equation}
we shall call $\Phi$ (or $\Psi$) eigenfunction (or adjoint
eigenfunction) of the BKP hierarchy. Obviously, by
(\ref{constrequivalent}) and (\ref{lineq}), $1$ and
$\psi_{BA}(t,\lambda)$ are the eigenfunctions.  From the fact
$B^*_{2n+1}\pa=-\pa B_{2n+1}$, any adjoint eigenfunction $\Psi$ can
be given in the form of $\Psi=\Phi_x$ with $\Phi$ be an
eigenfunction. In particular, the adjoint BA function
$\psi^*_{BA}(t,\lambda)= -\lambda^{-1}\psi_{BA}(t,-\lambda)_x$,
where
\begin{equation}\label{adjointbafunction}
\psi^*_{BA}(t,\lambda)\equiv W^{*-1}(e^{-\xi(t,\lambda)}).
\end{equation}
Thus in the BKP case, it is enough to only consider the
eigenfunctions. The relation between the eigenfunction $\Phi$ and
the BA wave function $\psi_{BA}(t,\lambda)$ is showed in the
following spectral representation \cite{cheng2010},
\begin{equation}\label{spectralrepresentation}
    \Phi(t)=\int d\lambda \psi_{BA}(t,\lambda)\varphi(\lambda),
\end{equation}
with
$\varphi(\lambda)=\lambda^{-1}S\left(\Phi(t'),\psi_{BA}(t',-\lambda)_{x'}\right)$,
where for any pair of (adjoint) eigenfunctions $\Phi(t),\Psi(t)$,
$S(\Phi(t),\Psi(t))$ is determined by the following equations,
\begin{equation}\label{sep1}
    \frac{\pa}{\pa t_{2n+1}}S\big(\Phi(t),\Psi(t)\big)=Res\left(\pa^{-1}\Psi B_{2n+1}\Phi\pa^{-1}\right),
n=0,1,2,3,\cdots.
\end{equation}
In particular for $n=0$,
\begin{equation}\label{sep2}
    \pa_xS(\Phi(t),\Psi(t))=\Phi(t)\Psi(t).
\end{equation}

 The constrained BKP hierarchy \cite{cheng1992,lorisjmp1999} can be defined by restricting the
Lax operator of the BKP hierarchy (\ref{laxeq}) in the following
form:
\begin{equation}\label{cbkplax}
    L^{2k+1}=\pa^{2k+1}+\sum_{i=1}^{2k-1}v_i\pa^{i}+\sum_{j=1}^m\left(\Phi_{1j}\pa^{-1}\Phi_{2j,x}-\Phi_{2j}\pa^{-1}\Phi_{1j,x}\right),
\end{equation}
for some fixed $k$, where $\Phi_{1j}$ and $\Phi_{2j}$ are the
eigenfunctions satisfying (\ref{eigen}).

\section{the Gauge Transformation for the BKP Hierarchy}
Now let's review some results about the gauge transformation for the
BKP hierarchy\cite{nimmo,he2007,hejhep,hejnmp}. Assume $T$ be a pseduo-differential operator and
$L^{(1)}=TL^{(0)}T^{-1}$ with $L^{(0)}$ be the Lax operator of the
BKP hierarchy. $T$ is called the gauge transformation for the BKP
hierarchy, if it satisfies:
\begin{itemize}
  \item preserving the Lax equation: $\pa_{2n+1} L^{(1)}=[B_{2n+1}^{(1)},L^{(1)}],\quad
  B_{2n+1}^{(1)}=(L^{(1)})^{2n+1}_+,$
  \item preserving the BKP constraint: $L^{(1)*}=-\pa L^{(1)}\pa^{-1}$.
\end{itemize}

In \cite{nimmo,he2007,hejhep}, it is showed that the operator
$T(\Phi)=T_I(\Phi/2)T_D(\Phi)$ satisfies the two conditions above,
and thus is the gauge transformation operator of the BKP hierarchy,
where $\Phi$ is the eigenfunctions of the BKP hierarchy and
\begin{equation}\label{tid}
    T_D(\Phi)=\Phi\pa \Phi^{-1},\quad T_I(\Phi/2)=(\Phi/2)^{-1}\pa^{-1}
    (\Phi/2).
\end{equation}
There is an important property satisfied by
$T(\Phi)=T_I(\Phi/2)T_D(\Phi)$ listed in the lemma below \cite{hejhep}.
\begin{lemma} The operator $T(\Phi)=T_I(\Phi/2)T_D(\Phi)$ obeys
   \begin{equation}\label{tproperty}
    T(\Phi)^*\pa T(\Phi)=\pa.
   \end{equation}
\end{lemma}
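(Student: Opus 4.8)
The plan is to collapse $T(\Phi)$ into a single pseudo-differential operator and then verify the identity by a direct but carefully organized computation of $T(\Phi)^*\pa T(\Phi)$. First I would note that the explicit scalar factors in $T_I(\Phi/2)=(\Phi/2)^{-1}\pa^{-1}(\Phi/2)$ cancel, since $\pa^{-1}$ commutes with constants, so that $T_I(\Phi/2)=\Phi^{-1}\pa^{-1}\Phi$ and hence
\[
T(\Phi)=\Phi^{-1}\pa^{-1}\Phi^2\pa\Phi^{-1}.
\]
Applying the conjugation rules $(AB)^*=B^*A^*$, $\pa^*=-\pa$, $f^*=f$ termwise (so that $(\pa^{-1})^*=-\pa^{-1}$), the two sign changes coming from $\pa$ and $\pa^{-1}$ combine to a plus, giving $T(\Phi)^*=\Phi^{-1}\pa\Phi^2\pa^{-1}\Phi^{-1}$. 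The objective is then to show $T(\Phi)^*\pa T(\Phi)=\pa$.

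For the main steps, I would first push the central $\pa$ into $T(\Phi)$ using $\pa\Phi^{-1}=\Phi^{-1}\pa-\Phi^{-2}\Phi_x$ together with the cancellation $\pa\pa^{-1}=1$; this splits $\pa T(\Phi)$ into a local part $\Phi\pa\Phi^{-1}$ and a nonlocal remainder $-\Phi^{-2}\Phi_x\pa^{-1}\Phi^2\pa\Phi^{-1}$ that still carries a $\pa^{-1}$. Multiplying on the left by $T(\Phi)^*$, the local part telescopes through $\pa^{-1}\Phi^{-1}\cdot\Phi\pa=1$ down to $\Phi^{-1}\pa\Phi=\pa+\Phi_x/\Phi$. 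Everything then reduces to proving that the leftover nonlocal term
\[
\Phi^{-1}\pa\Phi^2\pa^{-1}\Phi^{-3}\Phi_x\pa^{-1}\Phi^2\pa\Phi^{-1}
\]
equals exactly $\Phi_x/\Phi$, because it enters $T(\Phi)^*\pa T(\Phi)$ with a minus sign and so must cancel the $\Phi_x/\Phi$ produced by the local part.

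The hard part will be this nonlocal term: it is nominally of order zero, yet it contains two interior $\pa^{-1}$'s, and a naive expansion generates an infinite pseudo-differential tail. The device that defeats it is to write $\Phi^{-3}\Phi_x=-\tfrac12(\Phi^{-2})_x$ and invoke the operator identity (multiplication by $g_x$) $=\pa g-g\pa$, which yields $\pa^{-1}\Phi^{-3}\Phi_x\pa^{-1}=\tfrac12(\pa^{-1}\Phi^{-2}-\Phi^{-2}\pa^{-1})$. Substituting this, each surviving $\pa^{-1}$ now abuts an adjacent $\pa$ and cancels, collapsing the term to $\tfrac12(\Phi^{-1}\pa\Phi-\Phi\pa\Phi^{-1})=\Phi_x/\Phi$. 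Combining the pieces gives $T(\Phi)^*\pa T(\Phi)=(\pa+\Phi_x/\Phi)-\Phi_x/\Phi=\pa$, as claimed. As a sanity check (not part of the computation) one may observe that this is precisely the identity needed to propagate the Sato constraint $W^*\pa W=\pa$ under $W\mapsto T(\Phi)W$, which is why $T(\Phi)$ preserves the BKP structure.
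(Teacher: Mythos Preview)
Your computation is correct. The key simplification $T_I(\Phi/2)=\Phi^{-1}\pa^{-1}\Phi$, the adjoint $T(\Phi)^*=\Phi^{-1}\pa\Phi^2\pa^{-1}\Phi^{-1}$, the splitting of $\pa T(\Phi)$ via $\pa\Phi^{-1}=\Phi^{-1}\pa-\Phi^{-2}\Phi_x$, and the collapse of the nonlocal piece using $\pa^{-1}g_x\pa^{-1}=g\pa^{-1}-\pa^{-1}g$ with $g=\Phi^{-2}$ are all valid; the resulting cancellation $(\pa+\Phi_x/\Phi)-\Phi_x/\Phi=\pa$ is exactly what is needed.

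As for comparison with the paper: the paper does not actually prove this lemma. It merely states the identity and attributes it to reference~\cite{hejhep} (``listed in the lemma below \cite{hejhep}''), using it thereafter as a black box to guarantee that $W\mapsto T(\Phi)W$ preserves the Sato constraint $W^*\pa W=\pa$. Your direct verification therefore supplies what the paper omits; it is self-contained and elementary, relying only on the algebraic rules for $\pa$, $\pa^{-1}$ and the commutator identity for multiplication by a derivative, and is in the same spirit as the computation one would expect in the cited source.
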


Under the transformation $T(\Phi)=T_I(\Phi/2)T_D(\Phi)$, we will
have
\begin{eqnarray}
L^{(0)}\rightarrow L^{(1)}&=&T(\Phi)L^{(0)}T(\Phi)^{-1},\label{bkplaxgauge1}\\
\chi^{(0)}\rightarrow\chi^{(1)}&=&T(\Phi)(\chi^{(0)}),\label{bkpeigengauge1}\\
\tau_{KP}^{(0)}\rightarrow\tau_{KP}^{(1)}&=&\frac{\Phi^2}{2}
\tau_{KP}^{(0)},\label{bkptaukpgauge1}
\end{eqnarray}
where $L^{(0)}$ is the initial Lax operator of the BKP hierarchy,
$\chi^{(0)}$ is the initial eigenfunction, and $\tau_{KP}^{(0)}$ is
the initial BKP tau function inherited from the KP hierarchy, which
satisfies \cite{djkm1983}
\begin{equation}\label{taurelation}
    \tau_{KP}=\tau^2.
\end{equation}
From (\ref{bkptaukpgauge1}) and (\ref{taurelation}),
\begin{eqnarray}
\tau^{(0)}\rightarrow\tau^{(1)}&=&\frac{\Phi}{\sqrt{2}}
\tau^{(0)}.\label{bkptaubkpgauge1}
\end{eqnarray}
The actions of the gauge transformation
$T(\Phi)=T_I(\Phi/2)T_D(\Phi)$ on the dressing operator and the BA
function are showed as follows.
\begin{eqnarray}
W^{(0)}\rightarrow W^{(1)}&=&T(\Phi)W^{(0)},\label{bkpdressinggauge1}\\
\psi_{BA}^{(0)}\rightarrow\psi_{BA}^{(1)}&=&T(\Phi)(\psi_{BA}^{(0)}),\label{bkpwavegauge1}
\end{eqnarray}

Therefore under the successive gauge transformation
\begin{equation}\label{tngaugebkp}
T_n=T(\Phi_n^{(n-1)})T(\Phi_{n-1}^{(n-2)})...T(\Phi_2^{(1)})T(\Phi_1^{(0)}),
\end{equation}
where \{$\Phi_1,\Phi_2,...,\Phi_n$\} are a group of different
eigenfunctions of the BKP hierarchy, the Lax operator $L^{(0)}$ and
the BKP tau function $\tau^{(0)}$ will transform as
\begin{eqnarray}
L^{(n)}&=&T_nL^{(0)}T_n^{-1},\label{bkplaxgaugen}\\
\tau_{KP}^{(n)}&=&G_n(\Phi_1,\Phi_2,...,\Phi_n)
\tau_{KP}^{(0)},\label{bkptaukpgaugen}\\
\tau^{(n)}&=&\left(\frac{1}{2}\right)^{\frac{n}{2}}\Phi_n^{(n-1)}\Phi_{n-1}^{(n-2)}...\Phi_2^{(1)}\Phi_1^{(0)}
\tau^{(0)},\label{bkptaubkpgaugen}
\end{eqnarray}
where
\begin{equation}\label{gnexpression}
    G_n(\Phi_1,\Phi_2,...,\Phi_n)={\rm det}\left(S(\Phi_i,\Phi_{jx})\right)_{1\leq i,j\leq n}.
\end{equation}

The gauge transformation for the cBKP hierarchy is constructed in
\cite{he2007} from the corresponding results about the cKP hierarchy by
considering the BKP constraint. Now let's review the results about
the gauge transformation for the cBKP hierarchy.

Assume
\begin{equation}\label{cbkplax0}
    \left(L^{(0)}\right)^{2k+1}=\left(L^{(0)}\right)^{2k+1}_++\sum_{i=1}^m\left(\Phi_{1i}^{(0)}\pa^{-1}\left(\Phi_{2i}^{0)}\right)_x
  -\Phi_{2i}^{(0)}\pa^{-1}\left(\Phi_{1i}^{(0)}\right)_x\right)
\end{equation}
for the initial Lax operator of the cBKP hierarchy. Under the BKP
gauge transformation operator $T(\chi)=T_I(\chi)T_D(\chi)$,
(\ref{cbkplax0}) will become into:
\begin{eqnarray}
  \left(L^{(1)}\right)^{2k+1}&=&T(\chi)\left(L^{(0)}\right)^{2k+1}T_I(\chi)^{-1}=\left(L^{(1)}\right)^{2k+1}_++\left(L^{(1)}\right)^{2k+1}_-,\label{cbkplax1}\\
  \left(L^{(1)}\right)^{2k+1}_-&=&\Phi_{1,0}^{(1)}\pa^{-1}\left(\Phi_{2,0}^{1)}\right)_x
  -\Phi_{2,0}^{(1)}\pa^{-1}\left(\Phi_{1,0}^{(1)}\right)_x\nonumber\\
  &&+\sum_{i=1}^m\left(\Phi_{1i}^{(1)}\pa^{-1}\left(\Phi_{2i}^{1)}\right)_x
  -\Phi_{2i}^{(1)}\pa^{-1}\left(\Phi_{1i}^{(1)}\right)_x\right),\label{cbkplax1minus}\\
  \Phi_{1,0}^{(1)}&=&\left(T(\chi)\left(L^{(0)}\right)^{2k+1}\right)\left(\Phi_{1,0}^{(0)}\right),\quad \Phi_{2,0}^{(1)}=-\frac{2}{\Phi_{1,0}^{(0)}},\label{phi01}\\
  \Phi_{1i}^{(1)}&=&T(\chi)\left(\Phi_{1i}^{(0)}\right),\quad
  \Phi_{2i}^{(1)}=T(\chi)^{*-1}\left(\Phi_{2i}^{(0)}\right),\label{phii1}
\end{eqnarray}
In order to preserve the form of (\ref{cbkplax0}), $\chi$ is
required to coincide with one of the original eigenfunctions in
$\left(L^{(0)}\right)^{2k+1}$, e.g. $\chi=\Phi_{1,1}^{(0)}$, since
$\Phi_{1,1}^{(1)}=0$ in this case. Applying successive the BKP gauge
transformations
\begin{equation}\label{cbkplaxn}
    L^{(n+1)}=T^{(n)}L^{(n)}\left(T^{(n)}\right)^{-1},\ \
    T^{(n)}\equiv T(\Phi_{1,1}^{(n)})
\end{equation}
yields:
\begin{eqnarray}
  \left(L^{(n+1)}\right)^{2k+1}&=&\left(L^{(n+1)}\right)^{2k+1}_++\left(L^{(n+1)}\right)^{2k+1}_-,\label{cbkplaxnaddmin}\\
  \left(L^{(n+1)}\right)^{2k+1}_-&=&\sum_{i=1}^m\left(\Phi_{1i}^{(n+1)}\pa^{-1}\left(\Phi_{2i}^{n+1)}\right)_x
  -\Phi_{2i}^{(n+1)}\pa^{-1}\left(\Phi_{1i}^{(n+1)}\right)_x\right),\label{cbkplaxnmin}\\
  \Phi_{1,1}^{(n+1)}&=&\left(T^{(n)}\left(L^{(n)}\right)^{2k+1}\right)\left(\Phi_{1,1}^{(n)}\right),\quad \Phi_{2,1}^{(n+1)}=-\frac{2}{\Phi_{1,1}^{(n)}},\label{phi1n}\\
  \Phi_{1i}^{(n+1)}&=&T^{(n)}\left(\Phi_{1i}^{(n)}\right),\quad
  \Phi_{2i}^{(n+1)}=\left(T^{(n)}\right)^{*-1}\left(\Phi_{2i}^{(n)}\right),i=2,3,4,...\label{phiin}\\
  \tau^{(n+1)}&=&\frac{\Phi_{1,1}^{(n)}}{\sqrt{2}}\tau^{(n)}.\label{btaun}
\end{eqnarray}

We next restrict the cBKP hierarchy (\ref{cbkplax}) to $m=1$ case
and then will find the following proposition:
\begin{proposition}\label{orbitgauge}
The orbit of the gauge transformation for the cBKP hierarchy
(\ref{cbkplax}) (for $k\geq 1$ and $m=1$) defines a special $(2 +
1)$-dimensional Toda lattice equation structure (here we set
$\Phi=\Phi_{1,1}$).
\begin{eqnarray}
\pa_x\pa_{2k+1}{\rm
ln}\Phi^{(n)}=\frac{\left(\Phi^{(n)}\Phi^{(n+1)}\right)_x}{\left(\Phi^{(n)}\right)^2}
+\left(\left(\Phi^{(n-1)}\right)^{-1}\right)_x\Phi^{(n)}-\left(\Phi^{(n)}\right)_x\left(\Phi^{(n-1)}\right)^{-1},n=0,1,2,...\label{2toda}
\end{eqnarray}

\end{proposition}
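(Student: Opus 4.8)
The plan is to evaluate the left-hand side directly, rewriting it as $\pa_x\pa_{2k+1}{\rm ln}\,\Phi^{(n)}=\pa_x\big(\pa_{2k+1}\Phi^{(n)}/\Phi^{(n)}\big)$ and then feeding in the Lax operator of the $n$-th orbit. Set $\mathcal{L}:=(L^{(n)})^{2k+1}$. Because $\Phi^{(n)}=\Phi^{(n)}_{1,1}$ is an eigenfunction of the cBKP hierarchy carried by $L^{(n)}$ (this is what (\ref{phiin}) guarantees along the orbit), the flow equation (\ref{eigen}) gives $\pa_{2k+1}\Phi^{(n)}=B^{(n)}_{2k+1}(\Phi^{(n)})=\mathcal{L}_+(\Phi^{(n)})=\mathcal{L}(\Phi^{(n)})-\mathcal{L}_-(\Phi^{(n)})$. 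Hence
\[
\pa_x\pa_{2k+1}{\rm ln}\,\Phi^{(n)}=\pa_x\!\left(\frac{\mathcal{L}(\Phi^{(n)})}{\Phi^{(n)}}\right)-\pa_x\!\left(\frac{\mathcal{L}_-(\Phi^{(n)})}{\Phi^{(n)}}\right),
\]
and the entire proof reduces to matching the first term with the first summand of (\ref{2toda}) and the second term with the remaining two summands.

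For the first term I would exploit the recursion (\ref{phi1n}), which in this notation reads $\Phi^{(n+1)}=T^{(n)}\big(\mathcal{L}(\Phi^{(n)})\big)$ with $T^{(n)}=T_I(\Phi^{(n)}/2)T_D(\Phi^{(n)})$, so that $\mathcal{L}(\Phi^{(n)})=(T^{(n)})^{-1}(\Phi^{(n+1)})$. The decisive observation is that the inverse gauge operator telescopes: from (\ref{tid}) one finds $T^{(n)}=(\Phi^{(n)})^{-1}\pa^{-1}(\Phi^{(n)})^{2}\pa\,(\Phi^{(n)})^{-1}$, hence $(T^{(n)})^{-1}=\Phi^{(n)}\pa^{-1}(\Phi^{(n)})^{-2}\pa\,\Phi^{(n)}$, and applying it to $\Phi^{(n+1)}$ collapses to
\[
\mathcal{L}(\Phi^{(n)})=\Phi^{(n)}\,\pa^{-1}\!\left[\frac{\big(\Phi^{(n)}\Phi^{(n+1)}\big)_x}{(\Phi^{(n)})^{2}}\right].
\]
Then $\mathcal{L}(\Phi^{(n)})/\Phi^{(n)}=\pa^{-1}[\,\cdot\,]$, and the outer $\pa_x$ returns exactly $\big(\Phi^{(n)}\Phi^{(n+1)}\big)_x/(\Phi^{(n)})^{2}$, the nearest-neighbour coupling term of (\ref{2toda}); the integration constant hidden in $\pa^{-1}$ is killed by $\pa_x$ and is therefore irrelevant.

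For the second term I would insert the $m=1$ tail from (\ref{cbkplaxnmin}), $\mathcal{L}_-=\Phi^{(n)}\pa^{-1}\Phi^{(n)}_{2,1,x}-\Phi^{(n)}_{2,1}\pa^{-1}\Phi^{(n)}_x$, together with $\Phi^{(n)}_{2,1}=-2/\Phi^{(n-1)}$ from (\ref{phi1n}). Acting on $\Phi^{(n)}$ and using $\pa^{-1}(\Phi^{(n)}\Phi^{(n)}_x)=\frac{1}{2}(\Phi^{(n)})^{2}$ gives $\mathcal{L}_-(\Phi^{(n)})=\Phi^{(n)}\pa^{-1}[\Phi^{(n)}_{2,1,x}\Phi^{(n)}]+(\Phi^{(n)})^{2}/\Phi^{(n-1)}$. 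Dividing by $\Phi^{(n)}$ and differentiating, the first piece contributes $-\Phi^{(n)}_{2,1,x}\Phi^{(n)}=2\big((\Phi^{(n-1)})^{-1}\big)_x\Phi^{(n)}$ and the second contributes $-\pa_x\big(\Phi^{(n)}/\Phi^{(n-1)}\big)=-\Phi^{(n)}_x(\Phi^{(n-1)})^{-1}-\Phi^{(n)}\big((\Phi^{(n-1)})^{-1}\big)_x$; these combine to $\big((\Phi^{(n-1)})^{-1}\big)_x\Phi^{(n)}-\Phi^{(n)}_x(\Phi^{(n-1)})^{-1}$, the last two summands of (\ref{2toda}). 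Adding the two contributions proves the claim.

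I expect the only real obstacle to be the clean inversion in the second paragraph: recognizing that $(T^{(n)})^{-1}(\Phi^{(n+1)})$ reduces to a single $\pa^{-1}$ of $\big(\Phi^{(n)}\Phi^{(n+1)}\big)_x/(\Phi^{(n)})^{2}$ is precisely what manufactures the Toda coupling between steps $n$ and $n+1$, while every other manipulation is routine bookkeeping with $\pa^{-1}$. Two points need care: that $\Phi^{(n)}_{1,1}$ really is an eigenfunction of $L^{(n)}$, so that $\pa_{2k+1}\Phi^{(n)}=B^{(n)}_{2k+1}(\Phi^{(n)})$ may be used, and that all $\pa^{-1}$-constants drop out under the final $\pa_x$, so that (\ref{2toda}) holds as a genuine differential relation for each $n=0,1,2,\dots$.
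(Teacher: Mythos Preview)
Your proof is correct and follows essentially the same route as the paper. The paper starts from $\Phi^{(n+1)}=T^{(n)}\big(\mathcal{L}_+(\Phi^{(n)})\big)+T^{(n)}\big(\mathcal{L}_-(\Phi^{(n)})\big)$, multiplies by $\Phi^{(n)}$, differentiates in $x$, and divides by $(\Phi^{(n)})^2$; you instead apply $(T^{(n)})^{-1}$ to $\Phi^{(n+1)}$ and then take $\pa_x(\,\cdot\,/\Phi^{(n)})$. These are inverse rewritings of the same identity $T^{(n)}=(\Phi^{(n)})^{-1}\pa^{-1}(\Phi^{(n)})^{2}\pa\,(\Phi^{(n)})^{-1}$, and the handling of the $\mathcal{L}_-$ contribution via $\Phi^{(n)}_{2,1}=-2/\Phi^{(n-1)}$ and $\pa^{-1}(\Phi^{(n)}\Phi^{(n)}_x)=\tfrac{1}{2}(\Phi^{(n)})^2$ is identical in both arguments.
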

\begin{proof}
Firstly from (\ref{cbkplaxnmin}) and (\ref{phi1n}),
\begin{eqnarray*}
\Phi_{1,1}^{(n+1)} &=&
T^{(n)}\left(\left(L^{(n)}\right)_+^{2k+1}\left(\Phi_{1,1}^{(n)}\right)\right)+T^{(n)}\left(\left(L^{(n)}\right)_-^{2k+1}\left(\Phi_{1,1}^{(n)}\right)\right)\nonumber\\
&=&
T^{(n)}\left(\pa_{2k+1}\left(\Phi_{1,1}^{(n)}\right)\right)+T^{(n)}\left(\Phi_{1,1}^{(n)}\int^x\left(\Phi_{2,1}^{(n)}\right)_x\Phi_{1,1}^{(n)}
-\Phi_{2,1}^{(n)}\int^x\left(\Phi_{1,1}^{(n)}\right)_x\Phi_{1,1}^{(n)}\right)
\end{eqnarray*}
Then multiplying $\Phi_{1,1}^{(n)}$ to the both sides of the above identity
and differentiating it with respect to $x$, by the definition of $T^{(n)}$, we have
\begin{eqnarray*}
\frac{\left(\Phi_{1,1}^{(n)}\Phi_{1,1}^{(n+1)}\right)_x}{\left(\Phi_{1,1}^{(n)}\right)^2}
&=&
\pa_x\pa_{2k+1}{\rm ln}\left(\Phi_{1,1}^{(n)}\right)+\left(\Phi_{2,1}^{(n)}\right)_x\Phi_{1,1}^{(n)}-\frac{1}{2}\pa_x\left(\Phi_{2,1}^{(n)}\Phi_{1,1}^{(n)}\right)\nonumber\\
&=&
\pa_x\pa_{2k+1}{\rm ln}\left(\Phi_{1,1}^{(n)}\right)+\frac{1}{2}\left(\Phi_{2,1}^{(n)}\right)_x\Phi_{1,1}^{(n)}-\frac{1}{2}\left(\Phi_{1,1}^{(n)}\right)_x\Phi_{2,1}^{(n)}\nonumber\\
&=& \pa_x\pa_{2k+1}{\rm
ln}\left(\Phi_{1,1}^{(n)}\right)-\left(\left(\Phi_{1,1}^{(n-1)}\right)^{-1}\right)_x\Phi_{1,1}^{(n)}+\left(\Phi_{1,1}^{(n)}\right)_x\left(\Phi_{1,1}^{(n-1)}\right)^{-1}
\end{eqnarray*}
At last by considering $\Phi=\Phi_{1,1}$, (\ref{2toda}) will be
derived.
\end{proof}
Note that (\ref{2toda}) if set $\Phi^{(n)}=e^{u(n)}$ and
$t_{2k+1}=y$, then we will get the following special $(2 +
1)$-dimensional Toda lattice equation \cite{cao,qianjpa, hujmp,qiaozj}
\begin{equation}\label{2todaequation}
\pa_x\pa_y
u(n)=e^{u(n+1)-u(n)}\pa_x(u(n+1)+u(n))-e^{u(n)-u(n-1)}\pa_x(u(n)+u(n-1)).
\end{equation}

\section{the Pfaffian Structure of the BKP Hierarchy}
In this section, we will investigate the Pfaffian structures of the BKP
hierarchy generated by the gauge transformation.

Firstly, let's denote \cite{lorisjmp1999,cheng2010}
\begin{equation}\label{bsep}
    \Omega(\Phi_1,\Phi_2)=S(\Phi_2,\Phi_{1x})-S(\Phi_1,\Phi_{2x})
\end{equation}
for the eigenfunctions $\Phi_1$ and $\Phi_2$ of the BKP hierarchy,
which satisfies
\begin{equation}\label{bsepproperty}
    \Omega(\Phi_1,\Phi_2)=-\Omega(\Phi_2,\Phi_1).
\end{equation}
In particular, $\Omega(\Phi_1,1)=\Phi_1$, since $1$ is also an
eigenfunction of the BKP hierarchy. With the help of the spectral
representation (\ref{spectralrepresentation}) and (\ref{tau}), we
have \cite{cheng2010}
\begin{equation}\label{bsepgeneralformula}
    \Omega(\Phi_1,\Phi_2)=\int\int d\lambda d\mu
\varphi_1(\lambda)\varphi_2(\mu)e^{\xi(t,\lambda)+\xi(t,\mu)}\frac{\lambda-\mu}{\lambda+\mu}\frac{\tau(t-2[\lambda^{-1}]-2[\mu^{-1}])}{\tau(t)}.
\end{equation}

Thus from (\ref{sep2}) and (\ref{bsep}),
\begin{equation}\label{bseppropertysep}
    S(\Phi_1,\Phi_{2x})=\frac{1}{2}\left(\Phi_1\Phi_2-\Omega(\Phi_1,\Phi_2)\right).
\end{equation}

Then we start from the $\tau_{KP}^{(n)}$ generated by the gauge
transformation $T_n$ (see (\ref{tngaugebkp})) for the BKP hierarchy.
According to (\ref{bsepproperty}) and (\ref{bseppropertysep}), we
find
\begin{equation}\label{bkptaubkpgaugen2}
    \frac{\tau_{KP}^{(n)}}{\tau_{KP}^{(0)}}=\frac{1}{2^n}{\rm det}\left(\Phi_i\Phi_j-\Omega(\Phi_i,\Phi_j)\right)_{1\leq i,j\leq n}
    =\frac{1}{2^n}{\rm det}\left(\Phi_i\Phi_j+\Omega(\Phi_i,\Phi_j)\right)_{1\leq i,j\leq
    n},
\end{equation}
where ${\rm det}(A)={\rm det}(A^T)$ ($T$ denote the transpose) for
some matrix $A$ has been used. Before the further discussion, the
lemma \cite{wangqingwen} below is needed.
\begin{lemma}
If set $A=(a_{i,j})_{1\leq i,j n}$, $A^*$ be the adjoint matrix of
$A$, and
$$X=(x_1,x_2,...,x_n)^T,\quad Y=(y_1,y_2,...,y_n)^T,$$
then
\begin{equation}\label{matrixident}
{\rm det}(A+XY^T)={\rm det}(A)+Y^TA^*X.
\end{equation}
\end{lemma}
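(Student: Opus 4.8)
The plan is to exploit the multilinearity of the determinant in its columns, since the rank-one update $XY^T$ perturbs each column of $A$ by a scalar multiple of the single vector $X$. Writing $a_j$ for the $j$-th column of $A$ and $Y=(y_1,\dots,y_n)^T$, the $j$-th column of $A+XY^T$ is exactly $a_j+y_jX$. First I would expand
\[
\det(A+XY^T)=\det(a_1+y_1X,\,a_2+y_2X,\,\dots,\,a_n+y_nX)
\]
by multilinearity into a sum of $2^n$ determinants, each obtained by choosing, in every column, either the original $a_j$ or the perturbation $y_jX$.

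The key simplification is that any term in which $X$ is selected in two or more columns has two proportional columns and therefore vanishes. Hence only $n+1$ terms survive: the unperturbed determinant $\det(A)$, together with the $n$ terms in which $X$ replaces exactly one column. The $j$-th such term equals $y_j\det(a_1,\dots,a_{j-1},X,a_{j+1},\dots,a_n)$, that is, the determinant of $A$ with its $j$-th column replaced by $X$, scaled by $y_j$.

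Next I would Laplace-expand each single-substitution determinant along the column now carrying $X=(x_1,\dots,x_n)^T$. With $C_{ij}=(-1)^{i+j}M_{ij}$ denoting the $(i,j)$ cofactor of $A$ (the minors $M_{ij}$ are unaffected by the replacement, as they omit column $j$), this gives $\det(a_1,\dots,X,\dots,a_n)=\sum_{i=1}^n x_iC_{ij}$, so the surviving perturbation contributes $\sum_{j=1}^n\sum_{i=1}^n y_jC_{ij}x_i$. The final step is to recognise this double sum as $Y^TA^*X$: by definition the adjoint (adjugate) matrix has entries $(A^*)_{ji}=C_{ij}$, whence $Y^TA^*X=\sum_{j,i}y_j(A^*)_{ji}x_i=\sum_{i,j}y_jC_{ij}x_i$, matching exactly. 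Restoring $\det(A)$ then yields (\ref{matrixident}).

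The only point demanding care is bookkeeping rather than genuine difficulty: one must keep the transpose convention in the definition of the adjugate straight, so that the cofactor $C_{ij}$ occupies the $(j,i)$ slot of $A^*$, and check that the signs produced by the Laplace expansion along column $j$ are correctly absorbed into $C_{ij}$. As an alternative and as a sanity check, I would note that for invertible $A$ the identity follows from the standard matrix determinant lemma $\det(A+XY^T)=\det(A)(1+Y^TA^{-1}X)$ together with $A^*=\det(A)\,A^{-1}$, after which it extends to all $A$ because both sides of (\ref{matrixident}) are polynomials in the entries of $A$, $X$, $Y$ and agree on the dense set of invertible matrices; I would nonetheless present the multilinearity argument as the main proof, since it requires no nondegeneracy hypothesis.
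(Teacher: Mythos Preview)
Your argument is correct and complete. The multilinearity expansion followed by the observation that terms with $X$ in two or more columns vanish, then Laplace expansion along the substituted column, is the standard and cleanest route to the identity; your bookkeeping on the transpose convention for the adjugate is accurate, and the alternative density argument via $\det(A+XY^T)=\det(A)(1+Y^TA^{-1}X)$ is a sound backup.

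For comparison: the paper does not actually prove this lemma. It is quoted as a known fact from the cited linear-algebra text, so there is no ``paper's own proof'' to match against. Your write-up therefore supplies what the paper omits, and either of your two approaches would be acceptable as a self-contained justification.
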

Thus if denote $X=(\Phi_1,\Phi_2,..., \Phi_n)^T$ and
$\Omega=(\Omega(\Phi_i,\Phi_j))_{1\leq i,j\leq
n}\triangleq(\Omega_{ij})_{1\leq i,j\leq n}$, then from
(\ref{bkptaubkpgaugen2}) and (\ref{matrixident}), we have
\begin{equation}\label{bkptaubkpgaugen3}
\frac{2^n\tau_{KP}^{(n)}}{\tau_{KP}^{(0)}}= {\rm
det}(\Omega)+X^T\Omega^*X=\left\{
                            \begin{array}{ll}
                              {\rm det}(\Omega), & \hbox{\text{$n$ is even};} \\
                              X^T\Omega^*X, & \hbox{\text{$n$ is odd};}
                            \end{array}
                          \right.
\end{equation}
since $\Omega$ is antisymmetric matrix, and when $n$ is odd,
$\Omega^*$ is symmetric matrix, while $n$ is even, $\Omega^*$ is
antisymmetric. Therefore according to (\ref{taurelation}), the transformed form of the BKP's own tau
function $\tau$ can be derived just by taking the square root of (\ref{bkptaubkpgaugen3}).

When $n$ is even, the transformed BKP's own tau
function $\tau$ can be expressed as the Pfaffian \cite{hirotajpsj,lorisjmp1999},
 which is defined as the square root of the determinat
of an antisymmetric matrix of even order (see \cite{hirotabook} for more details). For example, the Pfaffians
of $2\times2$ and $4\times 4$ antisymmetric matrix $A=(a_{ij})$ are
${\rm Pf}(A)=a_{12}$ and ${\rm
Pf}(A)=a_{12}a_{34}-a_{13}a_{24}+a_{14}a_{23}$. Thus according to
(\ref{taurelation}) and (\ref{bkptaubkpgaugen3}),
\begin{equation}\label{bkptaubkpgaugeneven}
\frac{\tau^{(n)}}{\tau^{(0)}}= \frac{(-1)^{a(n)}}{2^{n/2}}{\rm
Pf}(\Phi_1,\Phi_2,...,\Phi_n),
\end{equation}
where ${\rm Pf}(\Phi_1,\Phi_2,...,\Phi_n)={\rm Pf}(\Omega)$. The
choice of the sign of (\ref{bkptaubkpgaugeneven}), when taking
square root, can be seen as follows. According to
(\ref{bkptaubkpgaugen}), if letting $\Phi_{n+1}=1$, since $1$ is
also the eigenfunction of the BKP hierarchy, and noting that
$1^{(n)}=T_n(1)=1^n=1$, then
\begin{equation}\label{evensign1}
\frac{\tau^{(n)}}{\tau^{(0)}}=\sqrt{2}\frac{\tau^{(n+1)}}{\tau^{(0)}},
\end{equation}
thus $\frac{\tau^{(n)}}{\tau^{(0)}}$ and
$\frac{\tau^{(n+1)}}{\tau^{(0)}}$ have the same sign. Similarly, if
setting $\Phi_{n+2}=1$, by noting that $1^{(n+1)}=-1$, then
\begin{equation}\label{evensign2}
\frac{\tau^{(n+1)}}{\tau^{(0)}}=-\sqrt{2}\frac{\tau^{(n+2)}}{\tau^{(0)}},
\end{equation}
which shows that $\frac{\tau^{(n+1)}}{\tau^{(0)}}$ and
$\frac{\tau^{(n+2)}}{\tau^{(0)}}$ have the different signs. As a
result, $\frac{\tau^{(n)}}{\tau^{(0)}}$ and
$\frac{\tau^{(n+2)}}{\tau^{(0)}}$ have the different signs. So we
can prove that
\begin{equation}\label{anexpression}
    a(n)=\frac{n(n-1)}{2}
\end{equation}
by induction.

When $n$ is odd, the expression of the transformed BKP's own tau
function $\tau$ is some complicated, which will be the
main task of this section. For this, the first thing is to
root the quadratic form of $X^T\Omega^*X$, thus we need
to show that the quadratic form of $X^T\Omega^*X$ is positive
semidefinte. In fact, from the fact ${\rm det}(\Omega)=0$, we can
know that ${\rm rank}(\Omega)\leq n-1$, and thus ${\rm
rank}(\Omega^*)\leq 1$ \cite{wangqingwen}. Further according to the fact
$\tau_{KP}^{(n)}\neq 0$ and (\ref{bkptaubkpgaugen3}), ${\rm
rank}(\Omega)= n-1$ and ${\rm rank}(\Omega^*)= 1$. Thus for the
$n\times n$ antisymmetric matrix $\Omega$, there exists the
orthogonal matrix $Q$\cite{wangqingwen}, such that,
\begin{equation}\label{omegaiden1}
    \Omega=Q^TBQ
\end{equation}
where
\begin{equation}\label{bexpression}
    B=\left(
                \begin{array}{cccccc}
                  0 & b_1 &  &  &  & \\
                  -b_1 & 0 &  &  &  & \\
                   & & \ddots &  &  &\\
                   &  &  & 0 & b_s & \\
                   &  &  & -b_{s} & 0 &\\
                   &  &  &  &  & 0\\
                \end{array}
              \right),
\end{equation}
with $s=(n-1)/2$. Therefore,
\begin{equation}\label{omegastariden1}
    \Omega^*=Q^{*}B^*Q^{*T}=B_{nn}\alpha_n\alpha_n^T=\beta\beta^T,
\end{equation}
with
\begin{equation}\label{bstar}
    B^*=\left(
          \begin{array}{ccc}
            0 &  &  \\
             & \ddots &  \\
             &  & B_{nn} \\
          \end{array}
        \right),
\end{equation}
where $B_{nn}=b_1^2\cdots b_s^2\geq0$,
$Q^*=(\alpha_1,...,\alpha_n)$, and $\beta=b_1\cdots b_s\alpha_n$. So
the matrix $\Omega^*$ is positive semidefinite and
\begin{equation}\label{oddomega}
    \sqrt{X^T\Omega^*X}=X^T\beta.
\end{equation}

On the other hand, denote $D\left[\begin{array}{c}i \\j
\end{array}\right]$ as the $( j, k)$th minor of $N$th-order
determinant $D={\rm det}(a_{ij})_{1\leq i,j\leq N}$, that is, the
$(N-1)$th-order determinant obtained by eliminating the $j$th row
and the $k$th column from $D$, and the $(N-2)$nd-order determinant
obtained by eliminating the $j$th and $k$th rows and the $l$th and
$m$th columns from the determinant $D$ as $D\left[\begin{array}{cc}
j & k \\l & m\end{array} \right]$, then the Jacobi identity \cite{hirotabook}
is expressed in the following form
\begin{equation}\label{jacobi}
D\left[\begin{array}{c}i \\i
\end{array}\right]D\left[\begin{array}{c}j \\j
\end{array}\right]-D\left[\begin{array}{c}i \\j
\end{array}\right]D\left[\begin{array}{c}j \\i
\end{array}\right]=D\left[\begin{array}{cc}
i & j \\i & j\end{array} \right]D.
\end{equation}

Thus for the antisymmetric matrix $\Omega$ of odd order, from
(\ref{jacobi}) and the fact that $\Omega^*$ is symmetric, we have
\begin{equation}\label{omegaiden2}
D\left[\begin{array}{c}i \\j
\end{array}\right]^2=D\left[\begin{array}{c}i \\i
\end{array}\right]D\left[\begin{array}{c}j \\j
\end{array}\right].
\end{equation}
Since $D\left[\begin{array}{c}i \\i
\end{array}\right]$ is the determinant of the antisymmetric matrix of even order, we have
\begin{equation}\label{dii}
    D\left[\begin{array}{c}i \\i
\end{array}\right]={\rm
Pf}^2(\Phi_1,...,\widehat{\Phi_i},...,\Phi_n)\triangleq{\rm Pf}_i^2,
\ \ i=1,2,...,n.
\end{equation}
Therefore if assume $\Omega^*=(A_{ij})_{1\leq i,j\leq n}$, then
\begin{equation}\label{omegastariden2}
    A_{ij}=(-1)^{i+j}D\left[\begin{array}{c}j \\i
\end{array}\right]=(-1)^{i+j}{\rm Pf}_i {\rm Pf}_j,
\end{equation}
where we have used the fact $\Omega^*$ is positive semidefinte to
omit the negative sign when taking the square root. So we can take
\begin{equation}\label{betaexpression}
    \beta=(-1)^{b(n)+1}(-{\rm
Pf}_1,{\rm Pf}_2,-{\rm Pf}_3...,{\rm Pf}_{n-1},-{\rm Pf}_n)^T.
\end{equation}
Thus by (\ref{oddomega}),
\begin{eqnarray}
\sqrt{X^T\Omega^*X}&=&(-1)^{b(n)+1}\sum_{j=1}^n(-1)^{j}\Phi_j{\rm
Pf}_j=(-1)^{b(n)}\sum_{j=1}^n(-1)^{j-1}\Omega(\Phi_j,1){\rm
Pf}_j\nonumber\\
&=&(-1)^{b(n)}\sum_{j=1}^n(-1)^{j-1}{\rm Pf}(\Phi_j,1){\rm
Pf}_j=(-1)^{b(n)}{\rm
Pf}(\Phi_1,\Phi_2,...,\Phi_n,1).\label{sqrtomega}
\end{eqnarray}
Therefore according to (\ref{taurelation}), (\ref{bkptaubkpgaugen3})
and (\ref{sqrtomega}),
\begin{equation}\label{bkptaubkpgaugenodd}
\frac{\tau^{(n)}}{\tau^{(0)}}= \frac{(-1)^{b(n)}}{2^{n/2}}{\rm
Pf}(\Phi_1,\Phi_2,...,\Phi_n,1),
\end{equation}
when $n$ is odd. By the same way as the even case, we have
\begin{equation}\label{bnexpression}
    b(n)=\frac{n(n-1)}{2}.
\end{equation}

At last, we summarize the result above as the following proposition.
\begin{proposition}\label{bkpgaugen}
Under the gauge transformation $T_n$ (see (\ref{tngaugebkp})), the
BKP tau function $\tau$ and the eigenfunction $\chi$ will transform
as
\begin{itemize}
  \item $n$ is even
\begin{eqnarray}
\frac{\tau^{(n)}}{\tau^{(0)}}&=&\frac{(-1)^{\frac{n(n-1)}{2}}}{2^{n/2}}{\rm
Pf}(\Phi_1,\Phi_2,...,\Phi_n),\label{bkptaubkpgaugenneweven}\\
\chi^{(n)}&=&\frac{{\rm Pf}(\Phi_1,\Phi_2,...,\Phi_n,\chi,1)}{{\rm
Pf}(\Phi_1,\Phi_2,...,\Phi_n)};\label{bkpeigenbkpgaugenneweven}
\end{eqnarray}
  \item $n$ is odd
\begin{eqnarray}
\frac{\tau^{(n)}}{\tau^{(0)}}&=&\frac{(-1)^{\frac{n(n-1)}{2}}}{2^{n/2}}{\rm
Pf}(\Phi_1,\Phi_2,...,\Phi_n,1),\label{bkptaubkpgaugennewodd}\\
\chi^{(n)}&=&-\frac{{\rm Pf}(\Phi_1,\Phi_2,...,\Phi_n,\chi)}{{\rm
Pf}(\Phi_1,\Phi_2,...,\Phi_n,1)}.\label{bkpeigenbkpgaugennewodd}
\end{eqnarray}
\end{itemize}
\end{proposition}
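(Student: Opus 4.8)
The plan is to prove the tau-function transformation rules first, since the eigenfunction formulas will then follow by the same ``append $\chi$ and $1$'' trick that was used to fix the signs. The starting point is equation (\ref{bkptaubkpgaugen3}), which already expresses $2^n\tau_{KP}^{(n)}/\tau_{KP}^{(0)}$ as $\mathrm{det}(\Omega)$ when $n$ is even and as $X^T\Omega^*X$ when $n$ is odd. Since the transformed BKP's own tau function is the square root via (\ref{taurelation}), the entire content of the proposition is the evaluation of these two square roots as Pfaffians, together with the determination of the signs.

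For the even case I would invoke the definition of the Pfaffian directly: $\Omega$ is an antisymmetric matrix of even order, so $\mathrm{det}(\Omega)=\mathrm{Pf}^2(\Omega)$, and (\ref{bkptaubkpgaugeneven}) follows immediately upon taking the square root, with an undetermined sign $(-1)^{a(n)}$. To pin down $a(n)$, I would append the eigenfunction $\Phi_{n+1}=1$ and use $1^{(n)}=T_n(1)=1$ to get the recursion (\ref{evensign1}), then append $\Phi_{n+2}=1$ and use $1^{(n+1)}=-1$ to get (\ref{evensign2}); comparing signs shows $\tau^{(n)}/\tau^{(0)}$ and $\tau^{(n+2)}/\tau^{(0)}$ differ by a sign, and a short induction yields $a(n)=n(n-1)/2$ as in (\ref{anexpression}).

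The odd case is the main obstacle, since the square root of the quadratic form $X^T\Omega^*X$ must be identified as a single Pfaffian rather than a determinant. Here I would first establish that $\Omega^*$ is positive semidefinite of rank one: from $\mathrm{det}(\Omega)=0$ (odd antisymmetric) we get $\mathrm{rank}(\Omega)\le n-1$, hence $\mathrm{rank}(\Omega^*)\le 1$, and since $\tau_{KP}^{(n)}\ne 0$ the form does not vanish, forcing $\mathrm{rank}(\Omega)=n-1$ and $\mathrm{rank}(\Omega^*)=1$. Diagonalizing $\Omega=Q^TBQ$ with $B$ in canonical block form (\ref{bexpression}) gives $\Omega^*=\beta\beta^T$ with $\beta=b_1\cdots b_s\,\alpha_n$, whence $\sqrt{X^T\Omega^*X}=X^T\beta$ as in (\ref{oddomega}). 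The crux is then to compute the entries of $\Omega^*$ explicitly: the Jacobi identity (\ref{jacobi}), applied to the symmetric $\Omega^*$, yields $D[{}^i_j]^2=D[{}^i_i]D[{}^j_j]$, and since each diagonal minor $D[{}^i_i]$ is the determinant of an even-order antisymmetric submatrix it equals $\mathrm{Pf}_i^2$. This lets me write $A_{ij}=(-1)^{i+j}\mathrm{Pf}_i\,\mathrm{Pf}_j$ and hence take $\beta$ with components $(-1)^{b(n)+1}(-1)^{j}\mathrm{Pf}_j$.

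Finally I would assemble $\sqrt{X^T\Omega^*X}=X^T\beta$ using $\Phi_j=\Omega(\Phi_j,1)=\mathrm{Pf}(\Phi_j,1)$ and recognize the resulting alternating sum $\sum_j(-1)^{j-1}\mathrm{Pf}(\Phi_j,1)\mathrm{Pf}_j$ as the Pfaffian expansion of $\mathrm{Pf}(\Phi_1,\dots,\Phi_n,1)$ along its last row, giving (\ref{sqrtomega}) and then (\ref{bkptaubkpgaugenodd}). The sign $b(n)=n(n-1)/2$ is fixed by the same appending-$1$ argument as in the even case. For the eigenfunction formulas (\ref{bkpeigenbkpgaugenneweven}) and (\ref{bkpeigenbkpgaugennewodd}), I would recall from (\ref{bkptaubkpgaugen}) and (\ref{tau}) that applying one more gauge step with generating function $\chi$ multiplies the tau function by $\chi^{(n)}/\sqrt{2}$; solving for $\chi^{(n)}$ as the ratio $\tau^{(n+1)}/\tau^{(n)}$ and substituting the already-proven Pfaffian formulas (with parities swapped by the extra step) produces the stated ratios of Pfaffians, the overall sign coming out consistently from the $a(n),b(n)$ bookkeeping.
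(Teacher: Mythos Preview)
Your proposal is correct and follows essentially the same route as the paper: the even case by the Pfaffian definition, the odd case via the rank-one structure of $\Omega^*$ together with the Jacobi identity to identify the entries $(-1)^{i+j}{\rm Pf}_i{\rm Pf}_j$ and then the Pfaffian row expansion, with the signs $a(n)=b(n)=n(n-1)/2$ fixed by the ``append $1$'' recursion. Your derivation of the eigenfunction formulas as $\chi^{(n)}=\sqrt{2}\,\tau^{(n+1)}/\tau^{(n)}$ (with $\Phi_{n+1}=\chi$) and then substituting the Pfaffian tau formulas is exactly the mechanism implicit in the paper, which simply states (\ref{bkpeigenbkpgaugenneweven}) and (\ref{bkpeigenbkpgaugennewodd}) without spelling out this step.
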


\section{the derivation of the Fay like identities for the BKP Hierarchy by the gauge transformation}
In this section, we will investigate the applications of the gauge
transformation in the derivation of the Fay-like identities for the
BKP hierarchy.

Firstly, with the help of the spectral representation
(\ref{spectralrepresentation}) for the eigenfunction, we have
\begin{equation}\label{phi0spectralrepresentation}
\Phi_{k+1}^{(0)}=\int
d\lambda\varphi_{k+1}^{(0)}(\lambda)\psi_{BA}^{(0)}(t,\lambda),
\end{equation}
where $\Phi_{k+1}^{(0)}$ is the generators of the gauge
transformation $T_n$ (see (\ref{tngaugebkp})). Then the application
of the gauge transformation $T_k$ to the both side of
(\ref{phi0spectralrepresentation}) will lead to
\begin{equation}\label{phikspectralrepresentation}
\Phi_{k+1}^{(k)}=\int
d\lambda\varphi_{k+1}^{(0)}(\lambda)\psi_{BA}^{(k)}(t,\lambda)=\int
d\lambda\varphi_{k+1}^{(0)}(\lambda)\frac{\tau^{(k)}(t-2[\lambda^{-1}])}{\tau^{(k)}(t)}e^{\xi(t,\lambda)},
\end{equation}
by using (\ref{tau}) and (\ref{bkpwavegauge1}). Further according to
(\ref{bkptaubkpgaugen}), we can get the following recurrence
relation,
\begin{eqnarray}
\tau^{(k+1)}&=&\frac{1}{\sqrt{2}}\int
d\lambda\varphi_{k+1}^{(0)}(\lambda)e^{\xi(t,\lambda)}\tau^{(k)}(t-2[\lambda^{-1}])\nonumber\\
&=&\frac{1}{\sqrt{2}}\int
d\lambda\varphi_{k+1}^{(0)}(\lambda):e^{-\hat{\theta}(\lambda)}:\tau^{(k)}(t)\label{taurecurrence}
\end{eqnarray}
where
\begin{equation}\label{vertexoperator}
\hat{\theta}(\lambda)=-\sum_{i=0}^\infty
t_{2i+1}\lambda^{2i+1}+\sum_{i=0}^\infty\frac{2}{(2i+1)\lambda^{2i+1}}\frac{\pa}{\pa
t_{2i+1}},
\end{equation}
and $:...:$ indicates standard normal ordering, i.e., $\frac{\pa
}{\pa t_{2i+1}}$ is on the right of $t_{2i+1}$. Using the following
identity:
\begin{equation}\label{wickidentity}
    :e^{-\hat{\theta}(\lambda_k)}:...:e^{-\hat{\theta}(\lambda_0)}:=:
e^{-\sum_{j=0}^k\hat{\theta}(\lambda_k)}:\prod_{i>j}\left(\frac{\lambda_i-\lambda_j}{\lambda_i+\lambda_j}\right),
\end{equation}
we can solve the recurrence relation (\ref{taurecurrence}) and
express $\tau^{(k+1)}$ in terms of the initial tau function
$\tau^{(0)}$:
\begin{eqnarray}
\tau^{(k+1)}&=&\left(\frac{1}{2}\right)^{\frac{k+1}{2}}\int
\prod_{j=0}^kd\lambda_j:e^{-\hat{\theta}(\lambda_k)}:...:e^{-\hat{\theta}(\lambda_0)}:\tau^{(0)}(t)\prod_{j=0}^k\varphi_{j+1}^{(0)}(\lambda_j)\nonumber\\
&=&\left(\frac{1}{2}\right)^{\frac{k+1}{2}}\int
\prod_{j=0}^kd\lambda_j\prod_{i>j}\left(\frac{\lambda_i-\lambda_j}{\lambda_i+\lambda_j}\right)
\prod_{s=0}^kf_s(t,\lambda_s)\tau^{(0)}(t-2\sum_{j=0}^k[\lambda_j^{-1}]),\label{taukplus1result1}
\end{eqnarray}
where
\begin{equation}\label{fkexpress}
    f_k(t,\lambda)=\varphi_{k+1}^{(0)}(\lambda)e^{\xi(t,\lambda)}.
\end{equation}

 On the other hand, when $k$ is odd, if let $k+1=2n$, then according to Proposition \ref{bkpgaugen} and (\ref{bsepgeneralformula}),
\begin{eqnarray}
&&\frac{\tau^{(k+1)}}{\tau^{(0)}}\nonumber\\
&=&\frac{(-1)^{\frac{k(k+1)}{2}}}{2^{{(k+1)}/2}}{\rm
Pf}(\Phi_1,\Phi_2,...,\Phi_{k+1})=\frac{(-1)^{\frac{k(k+1)}{2}}}{2^{{(k+1)}/2}}\sum_P(-1)^P\Omega_{i_1,i_2}\Omega_{i_3,i_4}...\Omega_{i_{2n-1},i_{2n}}\nonumber\\
&=&\frac{(-1)^{\frac{k(k+1)}{2}}}{2^{{(k+1)}/2}}\int
\prod_{j=0}^kd\lambda_j\prod_{s=0}^kf_s(t,\lambda_s)\sum_P(-1)^P\prod_{l=1}^n
\left(\frac{\lambda_{i_{2l-1}}-\lambda_{i_{2l}}}{\lambda_{i_{2l-1}}+\lambda_{i_{2l}}}\right)
\frac{\tau^{(0)}(t-2[\lambda_{i_{2l-1}}^{-1}]-2[\lambda_{i_{2l}}^{-1}])}{\tau^{(0)}(t)}\nonumber\\
&=&\frac{(-1)^{\frac{k(k+1)}{2}}}{2^{{(k+1)}/2}}\int
\prod_{j=0}^kd\lambda_j\prod_{s=0}^kf_s(t,\lambda_s){\rm
Pf}\left(\left(\frac{\lambda_{i}-\lambda_{j}}{\lambda_{i}+\lambda_{j}}\right)
\frac{\tau^{(0)}(t-2[\lambda_{i}^{-1}]-2[\lambda_{j}^{-1}])}{\tau^{(0)}(t)}\right)_{0\leq
i,j\leq k},\label{oddtaukplus1result2}
\end{eqnarray}
where $P$ is the sequence selected from \{$0,1,2,3,...,2n-1$\} that
satisfy
\begin{eqnarray}
&&i_1<i_2,i_3<i_4,i_5<i_6,...,i_{2n-1}<i_{2n},\nonumber\\
&&i_1<i_3<i_5<...<i_{2n-1},\label{sequence}
\end{eqnarray}
and the factor $(-1)^P$ takes the value $+1$ ($-1$) if the sequence
$i_1, i_2, ..., i_{2n}$ is an even (odd) permutation of $0,1, 2,...,
2n-1.$ Thus comparing with (\ref{taukplus1result1}), we can get
\begin{eqnarray}
&&\prod_{i>j}\left(\frac{\lambda_i-\lambda_j}{\lambda_i+\lambda_j}\right)
\tau^{(0)}(t-2\sum_{j=0}^k[\lambda_j^{-1}])\nonumber\\
&=&(-1)^{\frac{k(k+1)}{2}}\tau^{(0)}(t){\rm
Pf}\left(\left(\frac{\lambda_{i}-\lambda_{j}}{\lambda_{i}+\lambda_{j}}\right)
\frac{\tau^{(0)}(t-2[\lambda_{i}^{-1}]-2[\lambda_{j}^{-1}])}{\tau^{(0)}(t)}\right)_{0\leq
i,j\leq k}.\label{fay1}
\end{eqnarray}

When $k$ is even, Proposition \ref{bkpgaugen},
(\ref{spectralrepresentation}) and (\ref{bsepgeneralformula}) will
lead to
\begin{eqnarray}
&&\frac{\tau^{(k+1)}}{\tau^{(0)}}\nonumber\\
&=&\frac{(-1)^{\frac{k(k+1)}{2}}}{2^{{(k+1)}/2}}{\rm
Pf}(\Phi_1,\Phi_2,...,\Phi_{k+1},1)=\frac{(-1)^{\frac{k(k+1)}{2}}}{2^{{(k+1)}/2}}\sum_{j=1}^{k+1}(-1)^{j-1}\Phi_j{\rm Pf}_j\nonumber\\
&=&\frac{(-1)^{\frac{k(k+1)}{2}}}{2^{{(k+1)}/2}}\int
\prod_{j=0}^kd\lambda_j\prod_{s=0}^kf_s(t,\lambda_s)\sum_{l=0}^{k}(-1)^{l}\frac{\tau^{(0)}(t-2[\lambda_{l}^{-1}])}{\tau^{(0)}(t)}\nonumber\\
&&\times{\rm
Pf}\left(\left(\frac{\lambda_{p}-\lambda_{q}}{\lambda_{p}+\lambda_{q}}\right)
\frac{\tau^{(0)}(t-2[\lambda_{p}^{-1}]-2[\lambda_{q}^{-1}])}{\tau^{(0)}(t)}\right)_{0\leq
p,q\leq k,p\neq l,q\neq l}\label{eventaukplus1result2}
\end{eqnarray}
The comparison with (\ref{taukplus1result1}) gives rise to
\begin{eqnarray}
&&\prod_{i>j}\left(\frac{\lambda_i-\lambda_j}{\lambda_i+\lambda_j}\right)
\tau^{(0)}(t-2\sum_{j=0}^k[\lambda_j^{-1}])=(-1)^{\frac{k(k+1)}{2}}\sum_{l=0}^{k}(-1)^{l}\tau^{(0)}(t-2[\lambda_{l}^{-1}])\nonumber\\
&&\times{\rm
Pf}\left(\left(\frac{\lambda_{p}-\lambda_{q}}{\lambda_{p}+\lambda_{q}}\right)
\frac{\tau^{(0)}(t-2[\lambda_{p}^{-1}]-2[\lambda_{q}^{-1}])}{\tau^{(0)}(t)}\right)_{0\leq
p,q\leq k,p\neq l,q\neq l}.\label{fay2}
\end{eqnarray}
We summarize the result above into the following proposition
\begin{proposition}\label{faygauge}
The initial tau function $\tau^{(0)}(t)$ for the BKP hierarchy
satisfies the higher Fay-like identities.
\begin{itemize}
  \item $k$ is odd
\begin{eqnarray}
&&\prod_{i>j}\left(\frac{\lambda_i-\lambda_j}{\lambda_i+\lambda_j}\right)
\tau^{(0)}(t-2\sum_{j=0}^k[\lambda_j^{-1}])\nonumber\\
&=&(-1)^{\frac{k(k+1)}{2}}\tau^{(0)}(t){\rm
Pf}\left(\left(\frac{\lambda_{i}-\lambda_{j}}{\lambda_{i}+\lambda_{j}}\right)
\frac{\tau^{(0)}(t-2[\lambda_{i}^{-1}]-2[\lambda_{j}^{-1}])}{\tau^{(0)}(t)}\right)_{0\leq
i,j\leq k}.\label{propfay1}
\end{eqnarray}
  \item $k$ is even
\begin{eqnarray}
&&\prod_{i>j}\left(\frac{\lambda_i-\lambda_j}{\lambda_i+\lambda_j}\right)
\tau^{(0)}(t-2\sum_{j=0}^k[\lambda_j^{-1}])=(-1)^{\frac{k(k+1)}{2}}\sum_{l=0}^{k}(-1)^{l}\tau^{(0)}(t-2[\lambda_{l}^{-1}])\nonumber\\
&&\times{\rm
Pf}\left(\left(\frac{\lambda_{p}-\lambda_{q}}{\lambda_{p}+\lambda_{q}}\right)
\frac{\tau^{(0)}(t-2[\lambda_{p}^{-1}]-2[\lambda_{q}^{-1}])}{\tau^{(0)}(t)}\right)_{0\leq
p,q\leq k,p\neq l,q\neq l}.\label{propfay2}
\end{eqnarray}
\end{itemize}
\end{proposition}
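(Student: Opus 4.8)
The plan is to derive the two higher Fay-like identities by computing $\tau^{(k+1)}/\tau^{(0)}$ in two genuinely different ways and equating them. The backbone of the argument is already laid out in the excerpt: one expression comes from \emph{solving} the recurrence relation (\ref{taurecurrence}), which unfolds the $(k+1)$-step gauge transformation into an integral over $k+1$ spectral parameters; the other comes from \emph{evaluating} the Pfaffian formulas of Proposition \ref{bkpgaugen} and inserting the spectral integral representation for each eigenfunction $\Phi_s$. Since the two computations produce the same transformed tau function, their integrands must agree, and this agreement is the Fay-like identity. I would organize the proof according to the parity of $k$, because the Pfaffian formula in Proposition \ref{bkpgaugen} branches on whether the number of gauge steps $n=k+1$ is even or odd.

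First I would establish the common ``spectral unfolding'' expression (\ref{taukplus1result1}). Starting from the spectral representation (\ref{spectralrepresentation}) for the generating eigenfunction and applying $T_k$, I get the recurrence (\ref{taurecurrence}) written through the vertex operator $\hat{\theta}(\lambda)$ of (\ref{vertexoperator}). Iterating it and using the normal-ordering identity (\ref{wickidentity}) collapses the product of $k+1$ vertex operators into a single exponential times the pairwise factor $\prod_{i>j}\frac{\lambda_i-\lambda_j}{\lambda_i+\lambda_j}$, which upon acting on $\tau^{(0)}$ shifts its argument by $-2\sum_j[\lambda_j^{-1}]$. This yields the left-hand sides of both (\ref{propfay1}) and (\ref{propfay2}) as the integrand of a $(k+1)$-fold integral against $\prod_s f_s(t,\lambda_s)$.

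Next I would compute the same quantity from Proposition \ref{bkpgaugen}. When $k$ is odd, $n=k+1$ is even, so (\ref{bkptaubkpgaugenneweven}) gives $\tau^{(k+1)}/\tau^{(0)}$ as a Pfaffian of $\Omega=(\Omega(\Phi_i,\Phi_j))$; substituting (\ref{bsepgeneralformula}) for each entry and expanding the Pfaffian over perfect matchings $P$ produces (\ref{oddtaukplus1result2}), whose matching sum I would recognize as the Pfaffian of the kernel $\frac{\lambda_i-\lambda_j}{\lambda_i+\lambda_j}\frac{\tau^{(0)}(t-2[\lambda_i^{-1}]-2[\lambda_j^{-1}])}{\tau^{(0)}(t)}$. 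When $k$ is even, $n=k+1$ is odd, so (\ref{bkptaubkpgaugennewodd}) applies and I instead expand ${\rm Pf}(\Phi_1,\dots,\Phi_{k+1},1)$ along the last row, using $\Omega(\Phi_j,1)=\Phi_j$; this gives the alternating sum over $l$ of single-shift factors times the reduced Pfaffian with indices $p,q\neq l$, i.e.\ (\ref{eventaukplus1result2}). In both parities, equating the Pfaffian integrand against the unfolded integrand (\ref{taukplus1result1})—and noting that the spectral densities $f_s$ and the contours are arbitrary, so the integrands themselves must coincide—delivers (\ref{propfay1}) and (\ref{propfay2}) respectively.

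The main obstacle I anticipate is the bookkeeping of signs and of the Pfaffian-to-integrand translation, rather than any conceptual difficulty. Two points need care: verifying that the prefactor $(-1)^{k(k+1)/2}$ from Proposition \ref{bkpgaugen} survives correctly after the two $2^{-(k+1)/2}$ factors cancel, and justifying the passage from ``the two integrals are equal for arbitrary $\varphi^{(0)}_{j+1}$'' to ``the integrands are equal.'' The latter is legitimate because the $f_s(t,\lambda_s)$ in (\ref{fkexpress}) are built from arbitrary spectral densities, so one may symmetrize and invoke the arbitrariness to strip the integral; the symmetrization in the odd-$k$ case must be compatible with the antisymmetry of the Pfaffian kernel, which is exactly where the matching-sum reorganization in (\ref{oddtaukplus1result2}) earns its keep. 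Once these sign and arbitrariness checks are in place, collecting the odd and even cases into the statement of Proposition \ref{faygauge} is immediate.
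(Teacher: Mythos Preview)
Your proposal is correct and follows essentially the same route as the paper: derive $\tau^{(k+1)}$ once via the iterated vertex-operator recurrence (\ref{taurecurrence})--(\ref{wickidentity}) to obtain (\ref{taukplus1result1}), derive it again from the Pfaffian formulas of Proposition~\ref{bkpgaugen} together with (\ref{bsepgeneralformula}) (expanding ${\rm Pf}(\Phi_1,\dots,\Phi_{k+1},1)$ along the last row in the odd-$n$ case), and then compare integrands. Your extra care about justifying the passage from equal integrals to equal integrands via the arbitrariness of the spectral densities $\varphi_{j+1}^{(0)}$ is a point the paper leaves implicit, so if anything your write-up would be slightly more complete there.
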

At last, let's conclude this section with some
examples:
\begin{itemize}
  \item For $k=2$, (\ref{propfay2}) will lead to
\begin{eqnarray}
&&\frac{(\lambda_2-\lambda_0)(\lambda_2-\lambda_1)(\lambda_1-\lambda_0)}{(\lambda_2+\lambda_0)(\lambda_2+\lambda_1)(\lambda_1+\lambda_0)}
\tau^{(0)}(t-2[\lambda_0^{-1}]-2[\lambda_1^{-1}]-2[\lambda_2^{-1}])\tau^{(0)}(t)\nonumber\\
&&=\frac{(\lambda_{2}-\lambda_{1})}{(\lambda_{2}+\lambda_{1})}\tau^{(0)}(t-2[\lambda_{1}^{-1}]-2[\lambda_{2}^{-1}])\tau^{(0)}(t-2[\lambda_{0}^{-1}])\nonumber\\
&&-\frac{(\lambda_{2}-\lambda_{0})}{(\lambda_{2}+\lambda_{0})}\tau^{(0)}(t-2[\lambda_{0}^{-1}]-2[\lambda_{2}^{-1}])\tau^{(0)}(t-2[\lambda_{1}^{-1}])\nonumber\\
&&+\frac{(\lambda_{1}-\lambda_{0})}{(\lambda_{1}+\lambda_{0})}\tau^{(0)}(t-2[\lambda_{1}^{-1}]-2[\lambda_{0}^{-1}])\tau^{(0)}(t-2[\lambda_{2}^{-1}]),\label{fayk2}
\end{eqnarray}
which is just (\ref{Fayeq}) for
$s_0=-\lambda_0^{-1},s_1=-\lambda_1^{-1},s_2=-\lambda_2^{-1},s_3=0$.
  \item For $k=3$, according to (\ref{propfay1}),
\begin{eqnarray}
&&\frac{(\lambda_3-\lambda_0)(\lambda_3-\lambda_1)(\lambda_3-\lambda_2)(\lambda_2-\lambda_0)(\lambda_2-\lambda_1)(\lambda_1-\lambda_0)}
{(\lambda_3+\lambda_0)(\lambda_3+\lambda_1)(\lambda_3+\lambda_2)(\lambda_2+\lambda_0)(\lambda_2+\lambda_1)(\lambda_1+\lambda_0)}\\
&&\times
\tau^{(0)}(t-2[\lambda_0^{-1}]-2[\lambda_1^{-1}]-2[\lambda_2^{-1}]-2[\lambda_3^{-1}])\tau^{(0)}(t)\nonumber\\
&&=\frac{(\lambda_{0}-\lambda_{1})(\lambda_{2}-\lambda_{3})}{(\lambda_{0}+\lambda_{1})(\lambda_{2}+\lambda_{3})}\tau^{(0)}(t-2[\lambda_{0}^{-1}]-2[\lambda_{1}^{-1}])\tau^{(0)}(t-2[\lambda_{2}^{-1}]-2[\lambda_{3}^{-1}])\nonumber\\
&&-\frac{(\lambda_{0}-\lambda_{2})(\lambda_{1}-\lambda_{3})}{(\lambda_{0}+\lambda_{2})(\lambda_{1}+\lambda_{3})}\tau^{(0)}(t-2[\lambda_{0}^{-1}]-2[\lambda_{2}^{-1}])\tau^{(0)}(t-2[\lambda_{1}^{-1}]-2[\lambda_{3}^{-1}])\nonumber\\
&&+\frac{(\lambda_{0}-\lambda_{3})(\lambda_{1}-\lambda_{2})}{(\lambda_{0}+\lambda_{3})(\lambda_{1}+\lambda_{2})}\tau^{(0)}(t-2[\lambda_{0}^{-1}]-2[\lambda_{3}^{-1}])\tau^{(0)}(t-2[\lambda_{1}^{-1}]-2[\lambda_{2}^{-1}]),\label{fayk1}
\end{eqnarray}
which is just (\ref{Fayeq}) for
$s_0=-\lambda_0^{-1},s_1=-\lambda_1^{-1},s_2=-\lambda_2^{-1},s_3=-\lambda_3^{-1}$.
\end{itemize}
\section{the additional symmetry and the gauge transformation for the BKP Hierarchy }
In this section, we will study the compatibility between the
additional symmetry and the gauge transformation of the BKP
hierarchy.

Firstly, the additional symmetry of the BKP hierarchy (\ref{laxeq})
is defined by introducing the additional variables $\hat t_{ml}$\cite{Tu07}:
\begin{equation}\label{addsymm}
    \hat \pa_{ml}L=[-(A_{ml}(M,L))_-,L],\quad\hat \pa_{ml}W=-(A_{ml}(M,L))_-W,
\end{equation}
where
\begin{equation}\label{Aml}
    A_{ml}(M,L)=M^mL^l-(-1)^lL^{l-1}M^mL,
\end{equation}
satisfying
\begin{equation}\label{amlrelation}
    A_{ml}(M,L)^*=-\pa A_{ml}(M,L)\pa^{-1}
\end{equation}
and $M$ is the Orlov-Schulman operator defined by
\begin{equation}\label{osoperator}
    M=W\Gamma W^{-1},\quad \Gamma=\sum_{n=0}(2n+1)t_{2n+1}\pa^{2n},
\end{equation}
which satisfies
\begin{equation}\label{mrelation}
    \pa_{2n+1}M=[B_{2n+1},M],\quad [L,M]=1.
\end{equation}
Before considering the additional symmetry under the gauge
transformation, some lemmas are needed.
\begin{lemma}
For any pseudo-differential operator $P$ and a function $f$,
\begin{eqnarray}
(f\pa f^{-1}Pf\pa^{-1}f^{-1})_+&=&f\pa f^{-1}(P)_+f\pa^{-1}f^{-1}-( f\pa f^{-1}P_+)(f)\pa^{-1}f^{-1}\label{type1formula}\\
(f^{-1}\pa^{-1}fPf^{-1}\pa f)_-&=&f^{-1}\pa^{-1}f(P)_-f^{-1}\pa
f-f^{-1}\pa^{-1}(f\pa f^{-1}P^*_+)(f)\label{type2formula}
\end{eqnarray}
\end{lemma}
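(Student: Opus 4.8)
The plan is to prove the two identities separately, reducing the second to the first by passing to formal adjoints. Throughout I will write $P=P_++P_-$ and use repeatedly two elementary facts: a pseudo-differential operator of order $\le -1$ has vanishing differential part, and the adjoint preserves the splitting, i.e. $(A_\pm)^*=(A^*)_\pm$, together with $\pa^*=-\pa$, $(\pa^{-1})^*=-\pa^{-1}$, $f^*=f$ and the involution property $(A^*)^*=A$.

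For (\ref{type1formula}) I first insert $P=P_++P_-$ into the conjugate. The integral piece contributes nothing to the differential part, since $f\pa f^{-1}P_-f\pa^{-1}f^{-1}$ has order $\le 1+(-1)+(-1)=-1$, whence $(f\pa f^{-1}P_-f\pa^{-1}f^{-1})_+=0$. Thus the left-hand side equals $(f\pa f^{-1}P_+f\pa^{-1}f^{-1})_+$. Now I set $A:=f\pa f^{-1}P_+f$, which is purely differential (a product of differential operators and functions). The key elementary computation is that for a differential operator $A=\sum_{i\ge 0}a_i\pa^i$ one has $A\pa^{-1}=\sum_{i\ge 1}a_i\pa^{i-1}+a_0\pa^{-1}$, so $(A\pa^{-1})_-=a_0\pa^{-1}=A(1)\pa^{-1}$, where $A(1)$ is the action of $A$ on the constant $1$ (its zeroth coefficient). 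Multiplying on the right by $f^{-1}$ and noting that $A(1)\pa^{-1}f^{-1}$ has order $-1$ while $(A\pa^{-1})_+f^{-1}$ is differential, I obtain $(A\pa^{-1}f^{-1})_+=A\pa^{-1}f^{-1}-A(1)\pa^{-1}f^{-1}$. Finally I read off $A\pa^{-1}f^{-1}=f\pa f^{-1}P_+f\pa^{-1}f^{-1}$ and $A(1)=(f\pa f^{-1}P_+f)(1)=(f\pa f^{-1}P_+)(f)$, which together with $P_+=(P)_+$ is exactly (\ref{type1formula}).

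For (\ref{type2formula}) I exploit that conjugation by $f^{-1}\pa^{-1}f$ turns, under adjunction, into conjugation by $f\pa f^{-1}$. Reversing the order of factors and using $\pa^*=-\pa$, $f^*=f$, a direct check gives $(f^{-1}\pa^{-1}fPf^{-1}\pa f)^*=f\pa f^{-1}P^*f\pa^{-1}f^{-1}$, the two sign flips from $\pa^*$ and $(\pa^{-1})^*$ cancelling. Since the adjoint preserves the splitting, applying it to the minus part and using $Q_-=Q-Q_+$ yields $[(f^{-1}\pa^{-1}fPf^{-1}\pa f)_-]^*=f\pa f^{-1}P^*f\pa^{-1}f^{-1}-(f\pa f^{-1}P^*f\pa^{-1}f^{-1})_+$. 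I then apply (\ref{type1formula}) with $P$ replaced by $P^*$ to the last differential part; combining the leading term with $f\pa f^{-1}(P^*)_+f\pa^{-1}f^{-1}$ through $P^*-(P^*)_+=(P^*)_-$ collapses the whole expression to $f\pa f^{-1}(P^*)_-f\pa^{-1}f^{-1}+(f\pa f^{-1}P^*_+)(f)\pa^{-1}f^{-1}$. Taking the adjoint once more and using $((P^*)_-)^*=P_-$, together with the fact that $(f\pa f^{-1}P^*_+)(f)$ is a function (hence self-adjoint), returns precisely the right-hand side of (\ref{type2formula}); here the term $(f\pa f^{-1}P^*_+)(f)\pa^{-1}f^{-1}$ produces $-f^{-1}\pa^{-1}(f\pa f^{-1}P^*_+)(f)$ after the order reversal and the single surviving sign from $(\pa^{-1})^*$.

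The routine parts are the two order counts and the elementary identity $(A\pa^{-1})_-=A(1)\pa^{-1}$ for differential $A$. The step demanding the most care is the adjoint bookkeeping in (\ref{type2formula}): one must keep track that $\pa^*=-\pa$ and $(\pa^{-1})^*=-\pa^{-1}$ contribute sign changes which cancel in pairs in the two-derivative terms but survive in the single $\pa^{-1}$ term, and that the adjoint commutes with the $(\ )_\pm$ projections. This is exactly what accounts for the appearance of $P^*_+$, rather than $P_+$, on the right-hand side, and it is the place where an error is most likely to creep in.
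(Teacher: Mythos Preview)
Your proof is correct. The paper states this lemma without proof, so there is no argument to compare against; your approach---splitting $P=P_++P_-$, using the elementary identity $(A\pa^{-1})_-=A(1)\pa^{-1}$ for differential $A$ to handle (\ref{type1formula}), and then deducing (\ref{type2formula}) from (\ref{type1formula}) via the formal adjoint---is the standard and natural one, and all the sign bookkeeping checks out.
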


\begin{lemma}
For the eigenfunction $\Phi$ of the BKP hierarchy,
\begin{equation}\label{amlrelationlemma}
    A_{ml}(M,L)_+(\Phi)=-(T_D(\Phi)A_{ml}(M,L)T_D(\Phi)^{-1})^*_+(\Phi)
\end{equation}
\end{lemma}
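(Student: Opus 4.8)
The plan is to expand the right-hand side with the $T_D$-conjugation formula (\ref{type1formula}), push the conjugation $*$ through, evaluate the resulting operator on $\Phi$, and thereby reduce the whole statement to a single scalar identity which is then extracted from the $B$-type constraint (\ref{amlrelation}). Throughout I write $A=A_{ml}(M,L)$ for brevity.

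First I would apply (\ref{type1formula}) with $f=\Phi$ and $P=A$ to obtain
\[
(T_D(\Phi)A\,T_D(\Phi)^{-1})_+=T_D(\Phi)A_+T_D(\Phi)^{-1}-h\,\pa^{-1}\Phi^{-1},\qquad h:=(T_D(\Phi)A_+)(\Phi),
\]
where $h$ is a function. Since conjugation respects the splitting into differential and integral parts, $(\,\cdot\,)^*_+=((\,\cdot\,)_+)^*$, so I would conjugate this equality using $T_D(\Phi)^*=-\Phi^{-1}\pa\Phi$, $(T_D(\Phi)^{-1})^*=-\Phi^{-1}\pa^{-1}\Phi$, $(\pa^{-1})^*=-\pa^{-1}$ and $\Phi^*=\Phi$, which gives
\[
(T_D(\Phi)A\,T_D(\Phi)^{-1})^*_+=\Phi^{-1}\pa^{-1}\Phi\,A_+^*\,\Phi^{-1}\pa\Phi+\Phi^{-1}\pa^{-1}h .
\]
Acting on $\Phi$ and using $\pa(\Phi^2)=2\Phi\Phi_x$ collapses the two terms to $\Phi^{-1}\pa^{-1}\big(2\Phi A_+^*(\Phi_x)+h\Phi\big)$, while a short computation gives $h\Phi=\Phi\psi_x-\Phi_x\psi$ with $\psi:=A_+(\Phi)$. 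Hence the claimed equation $A_+(\Phi)=-(T_D(\Phi)A\,T_D(\Phi)^{-1})^*_+(\Phi)$ becomes, after multiplying by $\Phi$, applying $\pa$ and cancelling, equivalent to the single scalar identity
\[
A_+^*(\Phi_x)=-(A_+(\Phi))_x .
\]

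The remaining and genuinely substantive step is to prove this scalar identity, for which I would promote it to the operator identity $\pa A_+=-A_+^*\pa$ and then evaluate on $\Phi$. The constraint (\ref{amlrelation}) reads $A^*=-\pa A\pa^{-1}$, i.e. $\pa A+A^*\pa=0$; splitting $A=A_++A_-$ this says $\pa A_++A_+^*\pa=-(\pa A_-+A_-^*\pa)$. The left member is a purely differential operator, whereas the right member has order $\le 0$. The one point demanding care is that its order-zero contribution vanishes: writing $A_-=\sum_{i\ge1}b_i\pa^{-i}$, the order-zero terms of $\pa A_-$ and of $A_-^*\pa$ are $+b_1$ and $-b_1$ and so cancel, whence the right member in fact has order $\le -1$. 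A differential operator that is simultaneously of negative order must vanish, giving $\pa A_+=-A_+^*\pa$; applying this to $\Phi$ yields $(A_+\Phi)_x=\pa A_+(\Phi)=-A_+^*(\pa\Phi)=-A_+^*(\Phi_x)$, which is exactly the needed identity. I expect this order-zero cancellation — equivalently, the fact that the $B$-type constraint descends exactly to the differential truncation $A_+$ — to be the main obstacle, the rest being bookkeeping with the conjugation rules and the formula (\ref{type1formula}).
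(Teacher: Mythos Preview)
Your argument is correct and follows essentially the same route as the paper: apply (\ref{type1formula}), take the formal adjoint, invoke the $B$-type constraint on the differential part, and evaluate on $\Phi$. The only substantive difference is organisational — the paper substitutes $(A_{ml})_+^*=-\pa (A_{ml})_+\pa^{-1}$ at the operator level early on (citing (\ref{amlrelation}) without further comment) and then evaluates the two resulting pieces on $\Phi$ separately, whereas you keep $A_+^*$ in place, reduce the whole claim to the scalar identity $A_+^*(\Phi_x)=-(A_+\Phi)_x$, and only then promote it to the operator identity $\pa A_+=-A_+^*\pa$. Your order-zero cancellation argument is precisely the justification that the paper's step ``$A_+^*=-\pa A_+\pa^{-1}$ from (\ref{amlrelation})'' silently relies on, so if anything your write-up is more careful on the one genuinely nontrivial point.
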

\begin{proof}
According to (\ref{amlrelation}) and (\ref{type1formula}),
\begin{eqnarray*}
&&(T_D(\Phi)A_{ml}T_D(\Phi)^{-1})_+^*\\
&=&\Big(T_D(\Phi)(A_{ml})_+T_D(\Phi)^{-1}-\left(T_D(\Phi)(A_{ml})_+\right)(\Phi)\pa^{-1}\Phi^{-1}\Big)^*\\
&=&T_I(\Phi)(A_{ml})^*_+T_I(\Phi)^{-1}+\Phi^{-1}\pa^{-1}\left(T_D(\Phi)(A_{ml})_+\right)(\Phi)\\
&=&-T_I(\Phi)\pa(A_{ml})_+\pa^{-1}T_I(\Phi)^{-1}+\Phi^{-1}\pa^{-1}\left(T_D(\Phi)(A_{ml})_+\right)(\Phi),
\end{eqnarray*}
then
\begin{eqnarray*}
\Big(-T_I(\Phi)\pa(A_{ml})_+\pa^{-1}T_I(\Phi)^{-1}\Big)(\Phi)=-2\Phi^{-1}\int
\Phi\pa_x\Big((A_{ml})_+(\Phi)\Big)
\end{eqnarray*}
and
\begin{eqnarray*}
&&\Big(\Phi^{-1}\pa^{-1}\left(T_D(\Phi)(A_{ml})_+\right)(\Phi)\Big)(\Phi)=\Phi^{-1}\int
\Phi^2\pa_x\Big(\Phi^{-1}(A_{ml})_+(\Phi)\Big)\\
&=&-\Phi^{-1}\int(A_{ml})_+(\Phi)\Phi_x+\Phi^{-1}\int
\Phi\pa_x\Big((A_{ml})_+(\Phi)\Big).
\end{eqnarray*}
Therefore
\begin{eqnarray*}
&&(T_D(\Phi)A_{ml}(M,L)T_D(\Phi)^{-1})^*_+(\Phi)\\
&=&-\Phi^{-1}\int\left((A_{ml})_+(\Phi)\Phi_x+\Phi\pa_x\Big((A_{ml})_+(\Phi)\Big)\right)=-(A_{ml})_+(\Phi)
\end{eqnarray*}

\end{proof}
After the preparation above, we can get the proposition below.
\begin{proposition}\label{addgauge}
Additional symmetry flows (\ref{addsymm}) commute with the gauge transformation $T=T(\Phi)$ for the BKP hierarchy, that is,
\begin{equation}\label{addsymmgaugetrans}
\hat \pa_{ml}\widetilde{L}=[-(A_{ml}(\widetilde{M},\widetilde{L}))_-,\widetilde{L}],\quad \pa_{ml}\widetilde{W}=-(A_{ml}(\widetilde{M},\widetilde{L}))_-\widetilde{W} ,
\end{equation}
where
\begin{equation}\label{lmtransform}
  \widetilde{L}=TLT^{-1},\quad \widetilde{M}=TMT^{-1},
\end{equation}
if and only if the eigenfunction $\Phi$ transforms under the additional symmetries as:
\begin{equation}\label{addactoneigen}
  \hat \pa_{ml}\Phi=(A_{ml}(M,L))_+\Phi.
\end{equation}

\end{proposition}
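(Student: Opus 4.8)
The plan is to collapse the proposition onto a single operator identity for the gauge operator $T=T(\Phi)$ and then check that identity with the three lemmas above. Using $\widetilde{W}=TW$ from (\ref{bkpdressinggauge1}) together with $\hat\pa_{ml}W=-(A_{ml}(M,L))_-W$ (see (\ref{addsymm})), I first compute
\[
(\hat\pa_{ml}\widetilde{W})\widetilde{W}^{-1}=(\hat\pa_{ml}T)T^{-1}-T(A_{ml})_-T^{-1}.
\]
Since $A_{ml}$ is a polynomial in $M$ and $L$, conjugation yields $A_{ml}(\widetilde{M},\widetilde{L})=TA_{ml}(M,L)T^{-1}$, so the dressing-form claim $\hat\pa_{ml}\widetilde{W}=-(A_{ml}(\widetilde{M},\widetilde{L}))_-\widetilde{W}$ is \emph{equivalent} to
\[
(\hat\pa_{ml}T)T^{-1}=T(A_{ml})_-T^{-1}-\bigl(TA_{ml}T^{-1}\bigr)_-,
\]
and this identity in turn implies the Lax-form claim because $\widetilde{L}=\widetilde{W}\pa\widetilde{W}^{-1}$.

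I would then simplify the right-hand side. Because $T=T_I(\Phi/2)T_D(\Phi)=\Phi^{-1}\pa^{-1}\Phi^2\pa\Phi^{-1}$ has order $0$ with unit leading symbol (so $T=1+O(\pa^{-1})$, and likewise $T^{-1}$), the operator $T(A_{ml})_-T^{-1}$ has order $\le-1$ and is purely integral. Writing $(TA_{ml}T^{-1})_-=(T(A_{ml})_+T^{-1})_-+(T(A_{ml})_-T^{-1})_-$ then collapses the identity to
\[
(\hat\pa_{ml}T)T^{-1}=-\bigl(T(A_{ml})_+T^{-1}\bigr)_-,
\]
both sides now being purely integral, consistent with $\hat\pa_{ml}T$ having order $\le-1$.

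The core of the argument is to verify this reduced identity under the hypothesis $\hat\pa_{ml}\Phi=(A_{ml})_+\Phi$. On the left I would expand $(\hat\pa_{ml}T)T^{-1}$ by the product rule through $T=T_I(\Phi/2)T_D(\Phi)$, inserting $\hat\pa_{ml}\Phi=(A_{ml})_+\Phi$, which produces local expressions in $\Phi$ and $(A_{ml})_+\Phi$. On the right I would conjugate the differential operator $(A_{ml})_+$ first by $T_D$ and then by $T_I$: by (\ref{type1formula}) the integral remainder of the $T_D$-conjugation is a term proportional to $(T_D(\Phi)(A_{ml})_+)(\Phi)\pa^{-1}\Phi^{-1}$, while (\ref{type2formula}) produces a further integral term of the form $\Phi^{-1}\pa^{-1}(T_D(\Phi)(\cdot)^*_+)(\Phi)$ from the $T_I$-conjugation. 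The pivot is (\ref{amlrelationlemma}), which turns the adjoint boundary term $(T_D(\Phi)A_{ml}T_D(\Phi)^{-1})^*_+(\Phi)$ into $-(A_{ml})_+(\Phi)$; this is precisely what identifies the two integral corrections on the right with the variation terms on the left, so that the reduced identity closes. I expect this reconciliation of the $\pa^{-1}$-tails generated by the $(\cdot)_-$ projections against the explicit variation $\hat\pa_{ml}T$ to be the main obstacle, since it hinges on the adjoint bookkeeping of (\ref{amlrelationlemma}) combined with the constraint $T(\Phi)^*\pa T(\Phi)=\pa$ in (\ref{tproperty}).

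For the equivalence, I would exploit that the reduced identity is linear in $\hat\pa_{ml}\Phi$ on the left and independent of it on the right. The forward computation exhibits $(A_{ml})_+\Phi$ as a solution, and any other admissible $\hat\pa_{ml}\Phi$ differs from it by an element of the kernel of $\psi\mapsto(D_\psi T)T^{-1}$, where $D_\psi$ denotes the linearization in $\Phi$. Since $T(\lambda\Phi)=T(\Phi)$ for every constant $\lambda$, that kernel is exactly the scaling direction $\{c\Phi\}$, which leaves $T$ and hence the whole transformation unchanged. Thus compatibility forces $\hat\pa_{ml}\Phi=(A_{ml})_+\Phi$ up to this irrelevant rescaling, which gives the ``only if'' direction and completes the proof.
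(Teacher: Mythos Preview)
Your proposal is correct and follows essentially the same route as the paper: both reduce the claim to the operator identity $(\hat\pa_{ml}T)T^{-1}=T(A_{ml})_-T^{-1}-(TA_{ml}T^{-1})_-$ and then verify it by expanding each side with (\ref{type1formula}), (\ref{type2formula}) and the pivot (\ref{amlrelationlemma}). Your entry via $\widetilde{W}=TW$ (rather than the paper's $\widetilde{L}$--commutator computation) and your extra simplification using $T=1+O(\pa^{-1})$ are mild streamlinings, and your explicit acknowledgment that the ``only if'' direction determines $\hat\pa_{ml}\Phi$ only up to the scaling kernel $c\Phi$ of $T_D(\Phi)$ is in fact a point the paper glosses over.
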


\begin{proof}
Firstly, by (\ref{addsymm})
\begin{eqnarray}
\hat \pa_{ml}\widetilde{L}&=&\hat \pa_{ml}T\cdot L T^{-1}+T\hat \pa_{ml} L \cdot T^{-1}-TL T^{-1}\cdot\hat \pa_{ml}T\cdot T^{-1}\\
&=&[-T(A_{ml}(M,L))_-T^{-1}+\pa_{ml}T\cdot T^{-1},\widetilde{L}]
\end{eqnarray}
On the other hand,
\begin{eqnarray}
[-(A_{ml}(\widetilde{M},\widetilde{L}))_-,\widetilde{L}]=[-(T A_{ml}(\widetilde{M},\widetilde{L})T^{-1})_-,\widetilde{L}].
\end{eqnarray}
Thus (\ref{addsymmgaugetrans}) holds if and only if
\begin{equation}\label{conditions}
-(T A_{ml}(M,L)T^{-1})_-=-T(A_{ml}(M,L))_-T^{-1}+\hat \pa_{ml}T\cdot T^{-1}
\end{equation}
Since $T=T_I(\Phi/2)T_D(\Phi)=T_I(\Phi)T_D(\Phi)$, thus from (\ref{type1formula}), (\ref{type2formula}), and (\ref{amlrelationlemma})
\begin{eqnarray}
&&-(T A_{ml}(M,L)T^{-1})_-\nonumber\\
&=&-T_I(\Phi)(T_D(\Phi)A_{ml}(M,L)T_D(\Phi)^{-1})_-T_I(\Phi)^{-1}\nonumber\\
&&+\Phi^{-1}\pa^{-1}\Big(T_D(\Phi)(T_D(\Phi)A_{ml}(M,L)T_D(\Phi)^{-1})^*_+\Big)(\Phi)\nonumber\\
&=&-T_I(\Phi)(T_D(\Phi)A_{ml}(M,L)T_D(\Phi)^{-1})_-T_I(\Phi)^{-1}-\Phi^{-1}\pa^{-1}\Big(T_D(\Phi)(A_{ml}(M,L))_+\Big)(\Phi)\nonumber\\
&=&-T(A_{ml}(M,L))_-T^{-1}-T_I(\Phi)\Big(T_D(\Phi)(A_{ml}(M,L))_+\Big)(\Phi)\pa^{-1}\Phi^{-1}T_I(\Phi)^{-1}\nonumber\\
&&-\Phi^{-1}\pa^{-1}\Big(T_D(\Phi)(A_{ml}(M,L))_+\Big)(\Phi).\label{addgaugeproof1}
\end{eqnarray}
On the other hand,
\begin{eqnarray}
\hat\pa_{ml}T\cdot T^{-1}&=&\hat\pa_{ml}T_I(\Phi)\cdot T_I(\Phi)^{-1}+T_I(\Phi)\hat\pa_{ml}T_D(\Phi)\cdot T_D(\Phi)^{-1}T_I(\Phi)^{-1}\nonumber\\
&=&-\Phi^{-1}\hat\pa_{ml}\Phi+\Phi^{-1}\pa^{-1}\Big(\hat\pa_{ml}\Phi\Big) \Phi^{-1}\pa \Phi+T_I(\Phi)\cdot\Big(\hat\pa_{ml}\Phi\Big)\Phi^{-1}\cdot T_I(\Phi)^{-1}\nonumber\\
&&-T_I(\Phi)\cdot\Phi\pa\Phi^{-2}\cdot\Big(\hat\pa_{ml}\Phi\Big)\Phi\pa^{-1}\Phi^{-1}\cdot T_I(\Phi)^{-1}\nonumber\\
&=&-\Phi^{-1}\pa^{-1}\left(T_D(\Phi)\big(\hat\pa_{ml}\Phi\big) \right)-T_I(\Phi)\cdot\left(T_D(\Phi)\big(\hat\pa_{ml}\Phi\big) \right)\pa^{-1}\Phi^{-1}\cdot T_I(\Phi)^{-1}\label{addgaugeproof2}
\end{eqnarray}
Therefore from (\ref{addgaugeproof1}) and (\ref{addgaugeproof2}),  (\ref{conditions}) is true if and only if
\begin{equation*}
  \hat \pa_{ml}\Phi=(A_{ml}(M,L))_+\Phi.
\end{equation*}
\end{proof}
\section{Conclusions and Discussions}
We have provided four  applications of the gauge transformation for
the BKP hierarchy.
\begin{itemize}
  \item The orbit of the gauge transformation for the constrained BKP hierarchy
defines a special $(2 +1)$-dimensional Toda lattice equation structure (see Proposition \ref{orbitgauge});
  \item Starting from the Grammian determinant solutions of the BKP hierarchy, we derived
two types of the Pfaffian structures for the BKP hierarchy (see Proposition \ref{bkpgaugen});
  \item Through the gauge transformation of the BKP hierarchy, the higher order Fay-like identities are derived (see Proposition \ref{faygauge});
  \item The compatibility between the additional symmetry and the gauge transformation of the BKP
hierarchy is studied (see Proposition \ref{addgauge}).
\end{itemize}
Our results show that the gauge transformation is also  convenient
tool to study integrable properties, such as the Fay-like identities and the additional symmetries for the BKP
hierarchy.

For the non-Hermitian random matrix models, there are the Pfaffian
tau function structures\cite{asm99,avm2001}, which are closely
related with the BKP hierarchy. We hope that our results will be
helpful for the comprehension of the link between the BKP hierarchy
and the non-Hermitian random matrix models.

{\bf {Acknowledgements:}}
  {\small   This work is supported by the NSFC (Grant No. 11226196) and ``the Fundamental Research Funds for the
Central Universities" No. 2012QNA45.}

\end{document}